\documentclass[sigconf, nonacm,screen]{acmart}

\newcommand\vldbdoi{XX.XX/XXX.XX}
\newcommand\vldbpages{XXX-XXX}
\newcommand\vldbvolume{14}
\newcommand\vldbissue{1}
\newcommand\vldbyear{2026}
\newcommand\vldbauthors{\authors}
\newcommand\vldbtitle{\shorttitle} 
\newcommand\vldbavailabilityurl{https://github.com/kingstenzzz/FlashOrder-paper}
\newcommand\vldbpagestyle{plain} 

\usepackage[T1]{fontenc}
\usepackage[ruled,vlined]{algorithm2e}
\DontPrintSemicolon
\SetAlgoNoEnd
\SetAlgoNoLine
\SetKwComment{Comment}{$\triangleright$ }{}
\SetAlgoCaptionSeparator{.}
\usepackage{graphicx}
\usepackage{textcomp}
\usepackage{xcolor}
\usepackage{booktabs}
\usepackage{tabularx}
\usepackage{array}
\usepackage{multirow}
\usepackage{amsmath}
\usepackage{enumitem}
\usepackage{amsthm}
\usepackage{stfloats}
\usepackage{threeparttable}

\theoremstyle{plain}
\newtheorem{proposition}{Proposition}[section]

\begin{document}

\title{Breaking Cycles for Scalable Fair Ordering in Blockchain Systems}

\author{Jinchun He}
\affiliation{%
  \institution{Beihang University}
}
\email{hejinchun@buaa.edu.cn}

\author{Wangjie Qiu}
\affiliation{%
  \institution{Beihang University}
}
\email{wangjieqiu@buaa.edu.cn}

\author{Yizhong Liu}
\affiliation{%
  \institution{Beihang University}
}
\email{liuyizhong@buaa.edu.cn}

\author{Shengda Zhuo}
\affiliation{%
  \institution{Jinan University}
}
\email{zhuosd96@gmail.com}

\author{Kwok-Yan Lam}
\affiliation{%
  \institution{Nanyang Technological University}
}
\email{kwokyan.lam@ntu.edu.sg}

\begin{abstract}
In blockchain systems, transaction order directly determines financial outcomes: unfair ordering enables front-running and sandwich attacks that have extracted over \$686M from Ethereum users. Current fair-ordering protocols aggregate pairwise receive-order evidence from replicas. Under contention or adversarial manipulation, however, Condorcet cycles force them into global strongly connected component (SCC) condensation, causing delays, coarse batches, and scaling failures.

We present FlashOrder, a deterministic fair-ordering engine that localizes cyclic ambiguity before it propagates across the batch. FlashOrder embeds pairwise preferences into one-dimensional canonical positions, clusters nearby transactions with a partition hypergraph, and performs hierarchical inter- and intra-cluster serialization, replacing batch-wide SCC condensation with localized sorting and aggregation. 
Evaluated against Themis (CCS '23) and Rashnu (VLDB '24) on a libhotstuff-based prototype, FlashOrder achieves up to 10.5$\times$ higher throughput than Themis and 4.8$\times$ higher than Rashnu, with the latency gap widening as network scales. In controlled adversarial simulation, it reduces maximum rank displacement by 88.7\%, and under Condorcet attacks it sustains 12.0$\times$ and 9.7$\times$ higher throughput than Themis and Rashnu on average. These results show that localizing cyclic ambiguity yields stronger fairness at substantially higher throughput.
\end{abstract}

\keywords{blockchain, Byzantine consensus, transaction sequencing, order fairness}
 
\maketitle
\pagestyle{\vldbpagestyle}
\begingroup\small\noindent\raggedright\textbf{PVLDB Reference Format:}\\
\vldbauthors. \vldbtitle. PVLDB, \vldbvolume(\vldbissue): \vldbpages, \vldbyear.\\
\href{https://doi.org/\vldbdoi}{doi:\vldbdoi}
\endgroup
\begingroup
\renewcommand\thefootnote{}\footnote{\noindent
This work is licensed under the Creative Commons BY-NC-ND 4.0 International License. Visit \url{https://creativecommons.org/licenses/by-nc-nd/4.0/} to view a copy of this license. For any use beyond those covered by this license, obtain permission by emailing \href{mailto:info@vldb.org}{info@vldb.org}. Copyright is held by the owner/author(s). Publication rights licensed to the VLDB Endowment. \\
\raggedright Proceedings of the VLDB Endowment, Vol. \vldbvolume, No. \vldbissue\ %
ISSN 2150-8097. \\
\href{https://doi.org/\vldbdoi}{doi:\vldbdoi} \\
}\addtocounter{footnote}{-1}\endgroup

\ifdefempty{\vldbavailabilityurl}{}{
\vspace{.3cm}
\begingroup\small\noindent\raggedright\textbf{PVLDB Artifact Availability:}\\
The source code, data, and/or other artifacts have been made available at \url{\vldbavailabilityurl}.
\endgroup
}

\section{Introduction}
\label{sec:intro}

Replicated transactional systems rely on consensus protocols to ensure that all replicas execute transactions in the same deterministic order~\cite{thomson2010determinism}. Blockchain systems and distributed databases share this design space as replicated transactional systems~\cite{ruan2021blockchain}, where Byzantine Fault-Tolerant (BFT) protocols guarantee safety and liveness~\cite{castro1999practical,yin2019hotstuff} despite adversarial participants. However, while existing BFT protocols~\cite{buchman2018tendermint,miller2016honeybadger} ensure ordering consistency, they fall short in guaranteeing \textit{fairness}: a malicious leader can still manipulate transaction sequencing to censor, reorder, or prioritize transactions without violating consensus correctness~\cite{daian2020flash,kelkar2020order,vafadar2023condorcet}. 
This problem is particularly critical in decentralized finance (DeFi), where transaction order directly determines economic outcomes. Adversarial ordering manipulation, commonly known as maximal extractable value (MEV), has extracted more than \$686M on Ethereum alone~\cite{daian2020flash,qin2021flashloans,li2023demystifying}. More broadly, whenever multiple transactions compete for limited shared resources, ordering itself determines fairness. Consequently, recent systems have introduced \emph{order fairness} guarantees~\cite{kelkar2020order,vafadar2023condorcet}: if sufficiently many honest replicas observe transaction \(t_i\) before \(t_j\), the committed order should preserve this precedence relation.

\noindent\textbf{Existing order fairness systems rely on global dependency-graph resolution.} Several technical approaches have been proposed to prevent ordering manipulation. Timestamp-based schemes such as Pompe \cite{zhang2020pompe} and Wendy \cite{kursawe2020wendy} assign each transaction a representative timestamp and sort by temporal ordering, but remain vulnerable to clock manipulation and do not tolerate Byzantine timestamp injection. Trusted-execution-environment (TEE) designs \cite{stathakopoulou2021tee} enforce ordering through hardware isolation, trading stronger guarantees for reliance on secure hardware assumptions. Cryptographic approaches such as proof-carrying fair ordering \cite{ren2025proofcarrying} provide verifiable fairness guarantees at the cost of additional computational overhead. A more widely adopted family of solutions---and the focus of this paper---builds on pairwise receive-order evidence: each replica records its local transaction receive order, and the protocol aggregates these pairwise precedence relations into a dependency graph to derive a globally fair sequence. This evidence model is natural for Byzantine settings because it does not require synchronized clocks, trusted hardware, or strong cryptographic assumptions.

\begin{table*}[t]
    \centering
    \begin{threeparttable}
    \caption{Comparison with representative fair-ordering protocols.}
    \label{tab:intro-comparison}
    \begin{tabular}{@{}llllll@{}}
        \toprule
        \textbf{Protocol} & \textbf{Fairness} & \textbf{Graph Scope} & \textbf{Cycle Resol.} & \textbf{SCC-Dep.} & \textbf{Condorcet Handling} \\
        \midrule
        Wendy~\cite{kursawe2020wendy} & timed-relative-fairness & --- & Timestamp comparison & No & Timing-sensitive \\
        Aequitas~\cite{kelkar2020order} & $\gamma$-batch-order-fairness & Full batch & Global SCC condensation & Yes & Global collapse \\
        Themis~\cite{kelkar2023themis} & $\gamma$-batch-order-fairness & Full batch & Global SCC + batch unspooling & Yes & Global collapse \\
        Rashnu~\cite{nagda2024rashnu} & $\gamma$-batch-order-fairness & Pruned (data-dep.) & Pruned SCC condensation & Yes & Pruned collapse \\
        \midrule
        \textbf{FlashOrder} & \textbf{$\gamma$-batch-order-fairness} & \textbf{Partition clusters} & \textbf{Local clustering + sorting} & \textbf{No} & \textbf{Localized degradation} \\
        \bottomrule
    \end{tabular}
    \begin{tablenotes}
        \small
        \item \emph{Graph Scope}: the scope of the dependency structure used for ordering.
        \item \emph{Cycle Resol.}: how cyclic pairwise ambiguity is resolved.
        \item \emph{SCC-Dep.}: whether the protocol depends on SCC condensation.
        \item \emph{Condorcet Handling}: qualitative behavior under Condorcet-style cyclic attacks.
    \end{tablenotes}
    \end{threeparttable}
\end{table*}

Within this pairwise-evidence family, a sequence of systems has progressively refined how the dependency graph is constructed and integrated with consensus. Themis \cite{kelkar2023themis} first implemented practical order-fairness: each replica records its local transaction receive order, the protocol aggregates these pairwise precedence relations into a directed dependency graph, resolves cyclic ambiguity through strongly connected component (SCC) condensation, and produces a linear sequence via topological sorting. Rashnu \cite{nagda2024rashnu} observes that not all transaction pairs require ordering constraints---only data-dependent transactions accessing shared state need fair ordering---and therefore constructs dependency edges only for conflicting transactions to reduce graph volume. 

\noindent\textbf{Global SCC resolution limits fair-ordering scalability.}
Despite these refinements in graph construction, all of these systems share the same SCC-based cycle-resolution paradigm---first construct a global dependency graph, then resolve cycles through SCC condensation---and are therefore subject to the same bottleneck. Under high contention, network jitter, or adversarial Condorcet attacks, pairwise majority relations become densely cyclic. Once such cyclic ambiguity propagates across the batch, SCC condensation moves onto the critical path, producing coarse-grained batching, unstable latency, and poor scalability.
This bottleneck is acute in the operating regimes that matter most for modern blockchain systems. High-frequency DeFi workloads generate many near-simultaneous transactions whose pairwise precedence is weakly separated; adversaries can further amplify this ambiguity by selectively inducing conflicting local observations. In these settings, SCC-based resolution causes confirmation delay (cyclic transactions wait for the full SCC), coarse batches (a small ambiguous region collapses a much larger batch), and poor scalability (global cycle resolution becomes the critical path as replicas and transactions grow). Recent bounded-unfairness results sharpen this concern by showing that minimizing order distortion inside such cyclic regions is NP-hard and admits no constant-factor polynomial-time approximation \cite{kiayias2024ordering}. The central weakness of SCC-based fair ordering is thus structural, not merely an implementation artifact.

This observation motivates a different design question: rather than accelerating global cycle resolution, can we structure the ordering pipeline so that cyclic ambiguity never reaches the scale at which such resolution becomes expensive? 

\noindent\textbf{Our approach.} We present \textbf{FlashOrder}, a deterministic fair-ordering engine that answers this question with a \emph{localization-based} design. The engine operates through a three-stage pipeline: (1)~robust canonical embedding of thresholded pairwise evidence into a one-dimensional ordering signal, (2)~adaptive clustering of nearby ambiguous transactions as a partition hypergraph, and (3)~hierarchical inter- and intra-cluster serialization. FlashOrder does not redefine fairness and does not redesign the consensus substrate; instead, it rewrites the ordering engine itself so that dense pairwise evidence is processed through localized sorting and aggregation rather than batch-wide SCC condensation.

\textbf{Contributions:} This paper makes the following contributions.
\begin{itemize}[leftmargin=*]
    \item \textbf{We identify global SCC-based cycle resolution as the main execution bottleneck in prior graph-based fair-ordering protocols.} Our analysis shows that, under contention and Condorcet-style cyclic ambiguity, the practical cost of fair ordering is dominated not by pairwise evidence aggregation itself, but by the global procedure required to resolve cyclic precedence relations; recent bounded-unfairness results further indicate that optimizing order distortion inside such cyclic regions is structurally intractable in general \cite{kiayias2024ordering}.
    \item \textbf{We propose FlashOrder, a deterministic fair-ordering engine that localizes cyclic ambiguity before it spreads across the batch.} FlashOrder combines thresholded canonical embedding, adaptive clustering over a partition hypergraph, and hierarchical serialization to transform dense pairwise evidence into a localized ordering structure, replacing global SCC condensation with deterministic local sorting and aggregation.
    \item \textbf{We show that this localized execution path simultaneously improves fairness and performance across complementary evaluation settings.} We implement FlashOrder on top of libhotstuff and evaluate it against HotStuff \cite{yin2019hotstuff}, Themis \cite{kelkar2023themis}, and Rashnu \cite{nagda2024rashnu} under varying network scales, contention levels, fairness settings, and adversarial ordering-manipulation scenarios. At $n=9$, FlashOrder achieves $30{,}680$ txs/s, a $10.5\times$ improvement over Themis and $4.8\times$ over Rashnu. As the network scales to $n=101$, the gap further widens: FlashOrder maintains $180$\,ms latency while Themis reaches $779$\,ms and Rashnu reaches $401$\,ms, because FlashOrder keeps the ordering path localized whereas Themis and Rashnu incur increasingly expensive global graph processing. In controlled adversarial simulation at $\gamma=0.8$, FlashOrder reduces maximum rank displacement from 301 (Themis) and 287 (Rashnu) to 34 positions. Under 50\,ms WAN Condorcet attacks at $\gamma=0.6$, FlashOrder sustains $12.0\times$ higher throughput than Themis and $9.7\times$ higher than Rashnu on average. Together, these results suggest that FlashOrder lowers the practical cost of fair ordering while improving fairness quality under explicit cluster-order conditions.
\end{itemize}

\section{Background}
\label{sec:background}

\subsection{Fair Ordering Preliminaries}
\label{sec:bg_prelim}

Replicated transactional systems rely on consensus protocols to ensure that all honest replicas execute transactions in the same deterministic order~\cite{thomson2010determinism}. Byzantine Fault-Tolerant (BFT) protocols guarantee safety and liveness~\cite{castro1999practical,yin2019hotstuff} but do not inherently ensure \textit{order fairness}: a malicious leader can still manipulate transaction sequencing without violating consensus correctness~\cite{daian2020flash,kelkar2020order}. This has motivated \emph{order fairness} as an additional property: if sufficiently many honest replicas observe transaction $tx_i$ before $tx_j$, the committed order should preserve this precedence~\cite{kelkar2020order,vafadar2023condorcet}.

\noindent\textbf{Why receive-order evidence?}
Several approaches have been proposed to realize order fairness. Timestamp-based schemes such as Pompe~\cite{zhang2020pompe} and Wendy~\cite{kursawe2020wendy} assign each transaction a representative timestamp, but remain vulnerable to clock manipulation and Byzantine timestamp injection. Propagation-time estimation cannot be captured precisely because the network is asynchronous and transactions may be arbitrarily delayed. As a result, the most widely adopted family of fair-ordering protocols---including Aequitas~\cite{kelkar2020order}, Themis~\cite{kelkar2023themis}, and Rashnu~\cite{nagda2024rashnu}---builds on \emph{pairwise receive-order evidence}: each replica records the order in which it receives transactions, and the protocol aggregates these local observation sequences to derive a globally fair ordering. This model is natural for Byzantine settings because it requires neither synchronized clocks nor trusted hardware.

\noindent\textbf{Condorcet paradox.}
Even when every individual replica's local observation sequence is transitive, the aggregated pairwise majority preferences can form cycles, a phenomenon known as the Condorcet paradox from social choice theory~\cite{condorcet1785}. Consider three replicas observing three transactions:
\begin{itemize}[leftmargin=*,nosep]
    \item Replica 1: $tx_1 \prec tx_2 \prec tx_3$
    \item Replica 2: $tx_2 \prec tx_3 \prec tx_1$
    \item Replica 3: $tx_3 \prec tx_1 \prec tx_2$
\end{itemize}
A majority (2 of 3) observe $tx_1$ before $tx_2$, a majority observe $tx_2$ before $tx_3$, and a majority observe $tx_3$ before $tx_1$---forming an intransitive cycle $tx_1 \to tx_2 \to tx_3 \to tx_1$. No linear order can respect all three majority preferences simultaneously, making strict pairwise fairness impossible in general. To address this impossibility, $\gamma$-batch-order-fairness~\cite{kelkar2020order} relaxes the requirement: transactions involved in a Condorcet cycle are delivered in the same batch, with no ordering guarantee within the batch. A deterministic tie-breaking rule (e.g., lexicographic ordering by transaction identifier) is then applied within each batch to produce a total order. Section~\ref{sec:problem} provides the formal definition.

\subsection{From Fairness to Bottleneck}
\label{sec:bg_bottleneck}

State-of-the-art fair ordering protocols (e.g., Themis~\cite{kelkar2023themis}) enforce $\gamma$-batch-order-fairness by constructing a pairwise dependency graph $G$, where directed edges represent honest-majority preferences, and then executing Tarjan's algorithm to collapse strongly connected components (SCCs) into indivisible batches. While theoretically sound, this \emph{graph-centric topology} exhibits significant practical fragility under contention.

Under high-contention workloads or targeted Condorcet attacks~\cite{vafadar2023condorcet}, adversarial nodes deliberately broadcast conflicting transaction sequences. As illustrated in Fig.~\ref{fig:attack}, the attack proceeds in four stages: (1) geographically distributed clients inject transactions with conflicting preference sequences; (2) these conflicting sequences induce local order disagreements across replicas; (3) the aggregated pairwise preferences form an intransitive voting cycle (Condorcet paradox); and (4) the SCC-based protocol must collapse all transactions into a single indivisible batch, leading to severe latency degradation or unresponsive ordering.

Recent work has attempted to address this bottleneck at the communication layer (CFTO~\cite{nassar2024cfto}) or the definition layer (Age-Aware Fairness~\cite{sokolik2025age}). However, these refinements remain dependent on constructing dense pairwise dependency graphs and executing expensive SCC condensation. Excessively large batches in these systems can worsen frontrunning risks within the batch~\cite{park2025frontrunning}, weakening the fairness objective they intend to improve.

The root cause of this fragility is that SCC-based methods allow cyclic dependencies to propagate globally before attempting resolution. In contrast, FlashOrder localizes cyclic conflicts into small, bounded regions before they chain across the entire transaction set, reducing the cost of cycle resolution from traversing a dense $O(m^2)$ graph to sorting within and across a small number of compact clusters.

\begin{figure}[t!]
    \centering
    \includegraphics[width=\linewidth]{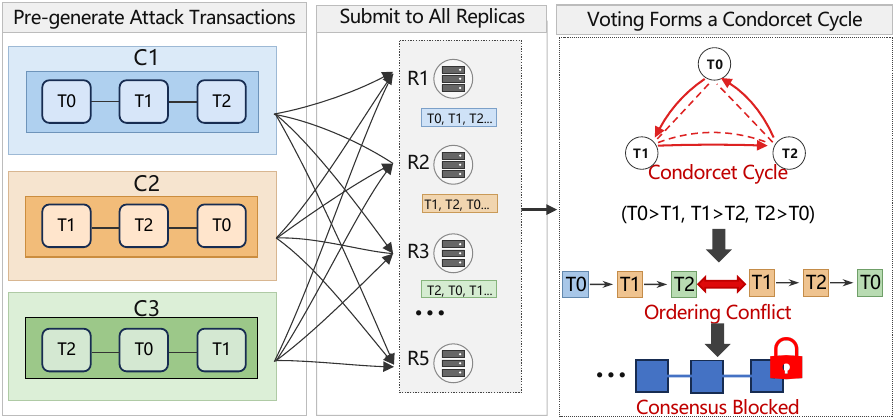}
    \caption{Condorcet attack: conflicting client sequences induce cyclic preferences, forcing SCC-based protocols to collapse the entire batch.}
    \label{fig:attack}
\end{figure}

\section{System Model and Problem Setup}
\label{sec:system_model}

This section provides the system model of FlashOrder. We first specify the BFT system and adversarial model, then state the threshold pairwise fairness objective adopted from prior work.

\subsection{System Model and Threat Model}
\label{sec:threat_model}

We consider a standard partially synchronous BFT network model inherited from classical Byzantine Fault Tolerant consensus protocols \cite{castro1999practical,yin2019hotstuff,miller2016honeybadger,gilad2017algorand,team2018avalanche,buchman2018tendermint,keidar2021dagrider}. The system consists of $n$ known replicas, of which at most $f$ may exhibit arbitrary (Byzantine) behavior at any given time. Byzantine replicas may deviate from the protocol arbitrarily, including sending conflicting messages to different subsets of honest replicas, omitting messages, or colluding to manipulate the transaction ordering.

The network follows the partial synchrony model: message delivery delays are bounded by some unknown $\Delta$ after an unknown Global Stabilization Time (GST), but there is no bound before GST. This is consistent with the underlying consensus protocol (HotStuff) and with prior fair-ordering baselines (Themis, Rashnu). Each replica maintains a local observation sequence $\mathcal{O}_i$ representing the order in which it receives transactions from clients. The observation sequences may differ due to network latency, message reordering, or adversarial manipulation by Byzantine nodes.

We assume standard cryptographic primitives: digital signatures for message authentication, collision-resistant hash functions for message commitment, and a public-key infrastructure (PKI) for key distribution. The adversary is computationally bounded and cannot break these cryptographic assumptions or compromise honest replica identities.

\noindent\textbf{Threat Model.} Following prior work on order-fairness \cite{vafadar2023condorcet,kelkar2023themis}, we consider Byzantine replicas that coordinate to maximize ordering disruption by sending conflicting transaction sequences to different honest replicas. This can induce Condorcet-style cyclic pairwise majorities in the aggregated preference graph~\cite{condorcet1785}. Such cyclic regions are the critical stress case for SCC-based fair-ordering engines because they prevent direct linearization of robust pairwise preferences and force cycle resolution onto the critical path.

Traditional BFT consensus requires $n > 3f$ to guarantee safety and liveness \cite{castro1999practical,yin2019hotstuff}. However, order-fairness introduces additional constraints because the protocol must distinguish honest-majority orderings from adversarial noise. Following Themis \cite{kelkar2023themis} and Aequitas \cite{kelkar2020order}, FlashOrder uses a stricter threshold.

\begin{lemma}
\label{lem:bft_threshold}
\textbf{BFT Feasibility Threshold for Order-Fairness.} Given $n$ replicas with at most $f$ Byzantine, the fairness threshold requires the following sufficient feasibility condition:
\begin{equation}
n > \frac{4f}{2\gamma - 1},
\end{equation}
where $\gamma \in (\frac{1}{2} + \frac{2f}{n}, 1]$ is the order-fairness parameter.
\begin{proof}
With $n$ replicas and $f$ Byzantine, the protocol relies on a quorum of $n-f$ replicas. Among this quorum, at most $f$ may be malicious and $f$ may be honest but slow. Thus, the fairness argument must reserve enough margin above a simple majority to absorb both Byzantine evidence and delayed honest evidence. Writing this feasibility slack as $(2\gamma-1)n$, order-fairness requires the slack to dominate the $4f$ uncertainty budget, i.e., $(2\gamma-1)n > 4f$. Rearranging yields $n > 4f/(2\gamma - 1)$. Under this convention, smaller $\gamma$ leaves less slack above the feasibility boundary, and therefore corresponds to a stricter admissible operating regime.
\end{proof}
\end{lemma}

\begin{lemma}
\label{lem:gamma_bounds}
\textbf{$\gamma$ Parameter Bounds.} The order-fairness parameter $\gamma$ is constrained to satisfy:
\begin{equation}
\frac{1}{2} + \frac{2f}{n} < \gamma \le 1.
\end{equation}
\end{lemma}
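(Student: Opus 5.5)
The plan is to derive both inequalities of Lemma~\ref{lem:gamma_bounds} directly from the feasibility condition of Lemma~\ref{lem:bft_threshold}, together with the definitional meaning of $\gamma$ as a fraction of replicas. The upper bound is the easy half. By definition, $\gamma$ is the fraction of replicas that must observe $tx_i$ before $tx_j$ for the precedence to be enforced. Such a fraction of $n$ replicas cannot exceed $n$, so $\gamma \le 1$. The case $\gamma = 1$ is admissible because it is the strongest (unanimity) requirement, and Lemma~\ref{lem:bft_threshold} places no further restriction there beyond $n > 4f$.

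For the lower bound, I would take the sufficient condition $n > 4f/(2\gamma-1)$ from Lemma~\ref{lem:bft_threshold} and solve it for $\gamma$. The first step is to note that the denominator $2\gamma - 1$ must be positive. Otherwise the right-hand side is undefined or nonpositive, and the slack $(2\gamma-1)n$ used in the proof of Lemma~\ref{lem:bft_threshold} could not dominate the $4f$ uncertainty budget; whenever $f \ge 1$, that domination fails. Hence $\gamma > 1/2$. Given positivity, I can multiply through to get $(2\gamma-1)n > 4f$. Dividing by $2n > 0$ then gives $\gamma - \tfrac12 > 2f/n$, that is, $\gamma > \tfrac12 + 2f/n$. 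This chain of steps is reversible. The stated interval is therefore exactly the set of $\gamma$ in $(0,1]$ satisfying the feasibility condition, which also confirms consistency with the range assumed in Lemma~\ref{lem:bft_threshold}.

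The main obstacle is the sign step. Clearing the denominator is only valid once $2\gamma-1>0$ is justified, and I must avoid circularly importing the bound that is being proved. I would handle this by arguing independently from the quorum intuition: a threshold at or below one half cannot distinguish a majority ordering from its reverse, so $\gamma > 1/2$ is forced on its own grounds. The degenerate case $f=0$ needs a separate remark, since there the lower bound reduces to $\gamma > 1/2$ by that same majority argument. Finally, I would check non-emptiness of the interval: $\tfrac12 + 2f/n < 1$ iff $n > 4f$. This is implied by Lemma~\ref{lem:bft_threshold} at $\gamma = 1$, and it shows the admissible range is nontrivial exactly in the feasible regime.
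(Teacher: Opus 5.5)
Your lower-bound argument is the paper's own: the paper simply rearranges the feasibility condition $n>4f/(2\gamma-1)$ of Lemma~\ref{lem:bft_threshold}. The sign issue you call the main obstacle is not really one. The proof of Lemma~\ref{lem:bft_threshold} states the requirement in product form, $(2\gamma-1)n>4f$, and dividing by $2n>0$ gives $\gamma>\frac12+\frac{2f}{n}$ with no case split. This also covers $f=0$, where the condition reads $(2\gamma-1)n>0$. So you need neither the separate majority argument nor the $f=0$ remark.

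The step that does not go through is the upper bound. You justify $\gamma\le 1$ by reading $\gamma$ as the fraction of replicas that must observe $tx_i$ before $tx_j$, and you call $\gamma=1$ the strongest (unanimity) requirement. That is the Aequitas-style reading, not this paper's.

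\begin{itemize}
\item Here a pair is robust when $P_{ij}>\tau$ with $\tau=\max(1,\lfloor n(1-\gamma)\rfloor+f+1)$. So $\gamma=1$ gives the \emph{lowest} admission threshold, $\tau=f+1$, and is the most permissive setting.
\item The paper's proof says explicitly that $\gamma$ should be read as a feasibility/slack parameter, not as an agreement fraction.
\item Under this convention, ``a fraction of $n$ replicas cannot exceed $n$'' says nothing about $\gamma$. Both $\tau$ and the condition $n>4f/(2\gamma-1)$ remain well defined for $\gamma>1$, and the latter only becomes easier to satisfy.
\end{itemize}

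In the paper the cap $\gamma\le 1$ is not derived at all. It is the stipulated parameter domain, already written into the hypothesis of Lemma~\ref{lem:bft_threshold}, and $\gamma=1$ is identified as the most permissive admissible value. You should present it that way. If you want to motivate it in the paper's own terms, note that for $f\ge 1$ the condition $\gamma\le 1$ is equivalent to $\tau\ge f+1$. That means every robust pair is backed by at least $f+2$ reports, hence by at least two honest replicas.

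The same misreading affects your fallback claim that ``a threshold at or below one half cannot distinguish a majority ordering from its reverse.'' The product form above makes that fallback unnecessary anyway.
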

\begin{proof}
The lower bound ensures that the honest supermajority exceeds the adversarial majority: $\gamma > 1/2 + 2f/n$ derives from $n > 4f/(2\gamma-1)$. Under the threshold convention used in this paper, $\gamma$ should be read as a feasibility/slack parameter rather than as a direct agreement fraction: values closer to the lower bound impose a stricter evidence-admission threshold and require larger deployments, while $\gamma=1$ is the most permissive admissible setting.
\end{proof}

The tighter threshold $n > \frac{4f}{2\gamma-1}$ replaces the standard $n > 3f$ because order-fairness requires distinguishing the honest-majority signal from Byzantine noise. For typical configurations (e.g., $\gamma=0.8$, $f=15$), this requires $n > 100$, hence our experimental focus on $n=101$.

\subsection{Thresholded Pairwise Fairness Objective}
\label{sec:problem}

Let $\mathcal{T} = \{tx_1, tx_2, \ldots, tx_m\}$ denote a batch of $m$ transactions submitted to the BFT system. Each replica $k \in \{1, \ldots, n\}$ maintains an observation sequence $\mathcal{O}_k$ defining a total order over $\mathcal{T}$. We denote by $tx_i \prec_{\mathcal{O}_k} tx_j$ the event that transaction $tx_i$ precedes $tx_j$ in replica $k$'s observation sequence.

For any pair $(tx_i, tx_j) \in \mathcal{T} \times \mathcal{T}$, define the \emph{preference count}:
\begin{equation}
P_{ij} = |\{k \in [n] : tx_i \prec_{\mathcal{O}_k} tx_j\}|,
\end{equation}
which counts the number of replicas observing $tx_i$ before $tx_j$. The preference matrix $P = [P_{ij}]_{i,j \in [m]}$ captures all pairwise receive-order information.

\begin{definition}
\textbf{($\gamma$-Batch-Order-Fairness)~\cite{kelkar2020order,kelkar2023themis}.} A transaction ordering protocol is $\gamma$-batch-order-fair if, given a partition of $\mathcal{T}$ into ordered batches $\mathcal{B}_1, \mathcal{B}_2, \ldots, \mathcal{B}_k$, for any pair $tx_i, tx_j \in \mathcal{T}$:
\begin{equation}
\text{if } P_{ij} > \tau \text{ then } \beta(tx_i) \leq \beta(tx_j),
\end{equation}
where $\beta(tx)$ denotes the batch index containing $tx$, and the consensus threshold $\tau$ is:
\begin{equation}
\tau = \max(1, \Big\lfloor n(1-\gamma) \Big\rfloor + f + 1).
\end{equation}
\label{def:batch_order_fairness}
\end{definition}

Intuitively, $\gamma$-batch-order-fairness requires that robust pairs be correctly ordered across batches, but makes no pairwise guarantee within each batch. This definition, adopted from Aequitas~\cite{kelkar2020order} and used by Themis~\cite{kelkar2023themis}, reflects the practical limit of what pairwise-evidence protocols can guarantee: when Condorcet cycles exist (i.e., $P_{12} > \tau, P_{23} > \tau, P_{31} > \tau$ simultaneously), no linear order can satisfy all robust pairs, making strict pairwise $\gamma$-fairness unattainable in general. Under the threshold convention used here,
$\tau = \max(1, \lfloor n(1-\gamma) \rfloor + f + 1)$
is the evidence-admission threshold for declaring a pair robust. Thus smaller $\gamma$ values raise $\tau$ and admit only stronger pairwise receive-order evidence, making settings close to the feasibility lower bound the strictest operating regime. Conversely, $\gamma=1$ yields the lowest admission threshold ($\tau=f+1$) and is therefore the most permissive setting.

The key challenge, therefore, is not how to redefine fairness, but how to realize these thresholded pairwise constraints without placing global cycle condensation on the critical path.

\section{The FlashOrder Engine}
\label{sec:design}

FlashOrder is a cluster-based fair ordering engine designed as a plug-and-play, deterministic state machine for BFT consensus protocols. As illustrated in Fig.~\ref{fig:Framework}, the core pipeline comprises three phases: robust canonical embedding (RCE), adaptive hypergraph modeling (AHM), and hierarchical order resolution (HOR), which collectively replace global SCC condensation with localized sorting and aggregation.

\subsection{System Architecture and Processing Pipeline}
\label{sec:sys_arch}
Each replica maintains a local observation sequence and gossips it through the peer-to-peer layer. Once the leader collects a threshold quorum of these sequences, the consensus layer forwards the aggregated data to the fair ordering engine. The underlying BFT layer safely replicates the raw observation sequences, while FlashOrder operates as an independent deterministic state machine atop the agreed-upon data. Unlike prior decoupled systems (e.g., Themis) that still incur $O(m^2)$ global graph condensation, FlashOrder restructures the computational payload so that honest replicas derive the same global ordering under the same certified input.

The internal workflow of the \texttt{finalize()} engine processes transactions through three hierarchical phases, as formally defined in {Algorithm \ref{alg:flashorder}}. Each phase explicitly transforms transaction states to progressively resolve ordering dependencies. To make this replay property precise, FlashOrder applies fixed deterministic tie-breaking rules at every ordering stage: whenever canonical positions or intra-cluster scores are equal, transactions are ordered lexicographically by a unique transaction identifier, and whenever the quotient graph admits multiple valid topological orders, clusters are serialized by a deterministic cluster key.

\begin{figure*}[t]
    \centering
    \includegraphics[width=\linewidth]{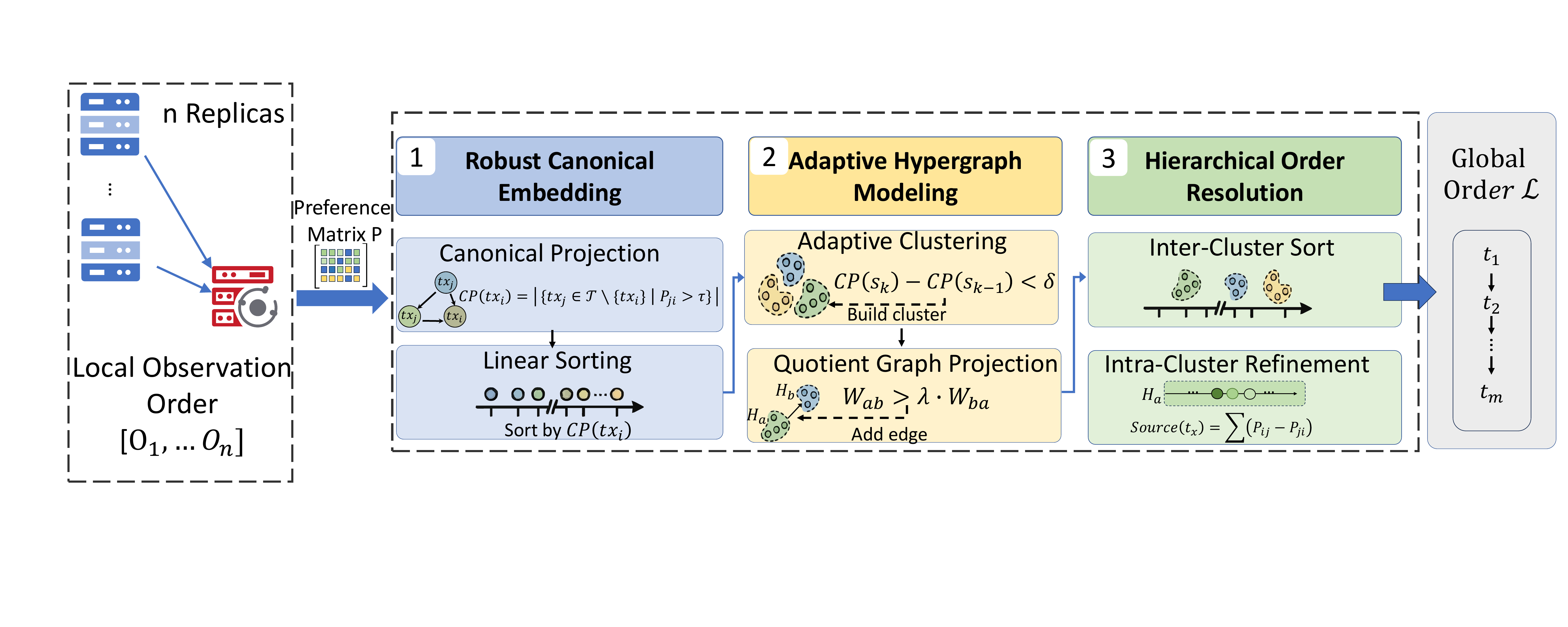}
    \caption{FlashOrder workflow}
    \label{fig:Framework}
\end{figure*}

\begin{algorithm}[!ht]
\caption{Main Workflow of FlashOrder}
\label{alg:flashorder}
\SetKwInOut{Input}{Input}\SetKwInOut{Output}{Output}
\Input{Transaction batch $\mathcal{T}$; agreed observations $\{\mathcal{O}_1, \dots, \mathcal{O}_n\}$; protocol params $\gamma, f, n$; system params $\Phi, \lambda, \mathbb{E}[delay], \Delta t_{gen}$}
\tcp{Each transaction carries a unique deterministic identifier $id(tx)$}
\Output{Global ordering $\mathcal{L}$}

\BlankLine
\tcp{Phase 1: Robust Canonical Embedding}
$\tau \leftarrow \max(1, \lfloor n \cdot (1 - \gamma) \rfloor + f + 1)$\;
$P_{ij} \leftarrow |\{v_k \mid tx_i \prec_{\mathcal{O}_k} tx_j\}|$\;
\For{each $tx_i \in \mathcal{T}$}{
  $CP(tx_i) \leftarrow |\{tx_j \in \mathcal{T} \setminus \{tx_i\} \mid P_{ji} > \tau\}|$\;
}
$\mathcal{S} = \langle s_1, \dots, s_m \rangle \leftarrow \text{sort } \mathcal{T} \text{ by } (CP \text{ asc.}, id(tx) \text{ asc.})$\;

\BlankLine
\tcp{Phase 2: Adaptive Hypergraph Modeling}
$\delta \leftarrow \Phi \cdot \frac{\mathbb{E}[delay]}{\Delta t_{gen}}$\;

\BlankLine
\tcp{Adaptive Geometric Clustering}
$\mathcal{H} \leftarrow \emptyset; H_{temp} \leftarrow \{s_1\}$\; 
\For{$k \leftarrow 2$ \KwTo $m$}{
    \If{$CP(s_k) - CP(s_{k-1}) < \delta$}{
        $H_{temp} \leftarrow H_{temp} \cup \{s_k\}$\;
    }
    \Else{
        $\mathcal{H} \leftarrow \mathcal{H} \cup \{H_{temp}\}; H_{temp} \leftarrow \{s_k\}$\;
    }
}
$\mathcal{H} \leftarrow \mathcal{H} \cup \{H_{temp}\}$\;

\BlankLine
\tcp{Quotient Graph Projection}
$\mathcal{G}_{HG} \leftarrow (\mathcal{H}, \mathcal{E}); \mathcal{E} \leftarrow \emptyset$\;
\For{each $(H_a, H_b) \in \mathcal{H} \times \mathcal{H}, a \neq b$}{
    $W_{ab} \leftarrow \sum_{tx_i \in H_a} \sum_{tx_j \in H_b} P_{ij}$\;
    \If{$W_{ab} > \lambda \cdot W_{ba}$}{ Add edge $H_a \rightarrow H_b$ to $\mathcal{E}$\; }
}
\While{nontrivial SCC $C$ exists in $\mathcal{G}_{HG}$, selected by deterministic cluster key}{
    $H_{merge} \leftarrow \bigcup_{H \in C} H$\;
    Replace vertices in $C$ with $H_{merge}$ in $\mathcal{G}_{HG}$\;
    Update edges $\mathcal{E}$ for $H_{merge}$\;
}
$\mathcal{H}_{final} \leftarrow V(\mathcal{G}_{HG})$\;
\For{each $H \in \mathcal{H}_{final}$}{
    $key(H) \leftarrow \min_{tx \in H} id(tx)$\;
}

\BlankLine
\tcp{Phase 3: Hierarchical Order Resolution}
$\mathcal{H}_{sorted} \leftarrow \text{LexicographicTopologicalSort}(\mathcal{G}_{HG}, key)$\;
$\mathcal{L} \leftarrow \langle \rangle$\;
\For{each $H \in \mathcal{H}_{sorted}$}{
    \For{each $tx_i \in H$}{
        $Score_i \leftarrow \sum_{tx_j \in H, j \neq i} (P_{ij} - P_{ji})$\;
    }
    $\mathcal{L}_H \leftarrow \text{sort } H \text{ by } (Score \text{ desc.}, CP \text{ asc.}, id(tx) \text{ asc.})$\;
    $\mathcal{L}.\text{append}(\mathcal{L}_H)$\;
}
\Return $\mathcal{L}$\;
\end{algorithm}

\subsection{Phase 1: Robust Canonical Embedding}
The primary objective of the first phase is to rapidly condense the raw, aggregated observation sequences into a robust 1D baseline array without allocating any expensive graph data structures (Algorithm \ref{alg:flashorder}, Phase 1). The input to this phase is the dense preference matrix $P$, where each entry $P_{ij}$ records the number of replicas that observed $tx_i$ arriving before $tx_j$. To filter out Byzantine manipulations and network jitter, the system uses a threshold $\tau = \max(1, \lfloor n(1-\gamma) \rfloor + f + 1)$, where $n$ is the total number of nodes and $f$ represents the upper bound of Byzantine faults. With this convention, decreasing $\gamma$ raises $\tau$ and therefore makes the robust-pair admission rule stricter.

Instead of inserting transactions into a directed graph to find edges, the engine calculates a \emph{Canonical Position (CP)} for every transaction via a sequential scan:
\begin{equation}
CP(tx_i) = |\{tx_j \in \mathcal{T} \setminus \{tx_i\} \mid P_{ji} > \tau\}|
\end{equation}
This Step 1: Canonical Projection requires only sequential scans over the preference matrix, yielding a deterministic rank proxy that counts how many other transactions robustly precede it. 

Once the CP array is populated, the deterministic state machine executes Step 2: Linear Sorting which sorts the transaction batch in ascending order of their CP values, using the transaction identifier as a canonical secondary key whenever CP values are equal, thereby yielding the linear baseline sequence $\mathcal{S}$. This canonical embedding compresses the $O(m^2)$ global dependency graph into a single linear array. By projecting the data in this manner, FlashOrder localizes any underlying cyclic conflicts (Condorcet paradoxes) to adjacent indices within this 1D array. This locality enables bounded conflict resolution in subsequent phases, without global graph condensation.

\subsection{Phase 2: Adaptive Hypergraph Modeling}
Transactions with proximate canonical coordinates in $\mathcal{S}$ represent ordering ambiguities that frequently manifest as Condorcet cycles under contention. Instead of instantiating a dense $O(m^2)$ dependency graph where every transaction is a node (the primary cause of computational saturation in prior works), FlashOrder applies a {Structural Compression} technique through adaptive geometric clustering (Algorithm \ref{alg:flashorder}, Phase 2), which produces a \emph{partition hypergraph} $(\mathcal{T}, \mathcal{H})$: the transaction set $\mathcal{T}$ is partitioned into disjoint clusters $\mathcal{H} = \{H_1, H_2, \ldots, H_k\}$, where each $H_a \subseteq \mathcal{T}$, $|H_a| \geq 1$, and $\bigcup_{a=1}^{k} H_a = \mathcal{T}$ with $H_a \cap H_b = \emptyset$ for $a \neq b$. We use the term partition hypergraph only for this disjoint cluster family; the subsequent inter-cluster dependency structure is an ordinary quotient graph. This reduces the topological state space from $m$ transactions to $k$ macro-nodes ($k \ll m$).

To achieve this, the engine first executes Step 1: Adaptive Clustering. Iterating through the sorted sequence $\mathcal{S}$, adjacent transactions are grouped into a cluster $H$ when their coordinate difference remains within threshold $\delta$. As defined in Algorithm~\ref{alg:flashorder}, we set
\begin{equation}
\delta = \Phi \cdot \frac{\mathbb{E}[delay]}{\Delta t_{gen}},
\end{equation}
where $\Phi$ is the fairness relaxation parameter (default $0.15$), and $\mathbb{E}[delay]$, $\Delta t_{gen}$ denote empirical network latency and transaction generation interval, respectively. This formulation keeps clustering sensitivity explicit and tunable through measurable system parameters. By confining cyclic dependencies within clusters, FlashOrder prevents these conflicts from growing into global SCCs.

Following consolidation, the system proceeds with Step 2: Quotient Graph Projection. A directed graph $\mathcal{G}_{HG} = (\mathcal{H}, \mathcal{E})$ is constructed as the quotient graph projection of the partition hypergraph, where nodes correspond to clusters $H_a \in \mathcal{H}$ and directed edges encode inter-cluster preference dominance. A directed edge $H_a \rightarrow H_b$ is established when aggregate preference weight $W_{ab} = \sum_{tx_i \in H_a} \sum_{tx_j \in H_b} P_{ij}$ exceeds reverse flow $W_{ba}$ by a specified margin. Formally, the margin $\lambda$ must satisfy $1 < \lambda < \frac{\gamma n - f}{(1-\gamma)n + f}$ so that inter-cluster edges are established only when the aggregate honest-majority signal reliably exceeds adversarial noise under the BFT threshold. In practice, $\lambda$ is a system parameter selected inside this feasible interval; lower-$\gamma$ regimes narrow the admissible interval and therefore require a smaller margin. In the rare case where macro-cycles persist in $\mathcal{G}_{HG}$ under extreme adversarial topologies, the engine detects nontrivial SCCs and deterministically merges each selected component into a single super-cluster, yielding an acyclic quotient graph. Each final cluster $H$ is then assigned a deterministic key $key(H)=\min_{tx \in H} id(tx)$.

\noindent\textbf{Derivation of the $\lambda$ bound.}
The $\lambda$ parameter controls when inter-cluster dominance is reliably detectable. We derive both bounds from $\gamma$-fairness requirements.

\noindent\emph{Lower bound ($\lambda > 1$):}
To establish a meaningful edge $H_a \rightarrow H_b$, we require clear dominance:
\begin{equation}
W_{ab} > \lambda \cdot W_{ba}
\end{equation}
If $\lambda \leq 1$, the condition $W_{ab} > \lambda \cdot W_{ba}$ would not guarantee $W_{ab} > W_{ba}$, allowing edges where reverse preference is equally strong. Thus, $\lambda > 1$ ensures inter-cluster edges require unambiguous forward dominance.

\noindent\emph{Upper bound derivation.}
Under the same feasibility parameter $\gamma$, the inter-cluster margin must account for the honest signal that can support the forward direction and the residual reverse evidence that may arise from disagreement, delay, or Byzantine behavior. Lower $\gamma$ values correspond to a stricter operating regime: the pairwise threshold $\tau$ is higher, and the admissible margin interval for $\lambda$ becomes narrower. The honest signal strength contributing to $W_{ab}$ is bounded by:
\begin{equation}
S_{\text{honest}} \approx \gamma (n - f) \approx \gamma n - f
\end{equation}
The adversarial noise or conflicting evidence that could contribute to $W_{ba}$ comes from: (i) the $(1-\gamma)$ fraction of honest nodes disagreeing, and (ii) all $f$ Byzantine nodes potentially providing reverse signals:
\begin{equation}
N_{\text{adversarial}} \approx (1-\gamma)(n - f) + f \approx (1-\gamma)n + f
\end{equation}
For the edge condition $W_{ab} > \lambda \cdot W_{ba}$ to reliably detect honest consensus over adversarial noise, we require:
\begin{align}
\lambda \cdot N_{\text{adversarial}} &< S_{\text{honest}} \nonumber \\
\lambda \cdot ((1-\gamma)n + f) &< \gamma n - f \nonumber \\
\lambda &< \frac{\gamma n - f}{(1-\gamma)n + f}
\end{align}

\noindent\emph{Feasibility condition:}
For this bound to exceed 1 (i.e., $\frac{\gamma n - f}{(1-\gamma)n + f} > 1$), we require:
\begin{align}
\gamma n - f &> (1-\gamma)n + f \nonumber \\
\gamma n - f - n + \gamma n &> f \nonumber \\
(2\gamma - 1)n &> 2f \nonumber \\
n &> \frac{2f}{2\gamma - 1}
\end{align}
Under the standard BFT threshold condition $n > \frac{4f}{2\gamma-1}$, this is always satisfied, guaranteeing the upper bound exceeds the lower bound. In the main deployment configurations we use $\lambda = 1.2$ as a conservative engineering margin; in stricter low-$\gamma$ regimes, the admissible interval narrows and $\lambda$ must be selected within the corresponding bound.

\subsection{Phase 3: Hierarchical Order Resolution}
The final phase of the FlashOrder pipeline deterministically flattens the dimensionally reduced data structures into a total ordering $\mathcal{L}$ through a two-tier serialization process (Algorithm \ref{alg:flashorder}, Phase 3). The prior phase absorbs dense, transaction-level Condorcet cycles into a compact macro-hypergraph ($\mathcal{G}_{HG}$), so this final serialization avoids the deep recursion stacks and overhead of global graph traversals.

First, the engine performs Step 1: Inter-Cluster Topological Sort. Because the prior phase has already absorbed any remaining macro-cycles through SCC merging, the quotient graph $\mathcal{G}_{HG}$ is guaranteed to be acyclic at this point. FlashOrder executes a lexicographic topological traversal over the macro-nodes (clusters), establishing a coarse-grained precedence among the aggregated transaction groups ($\mathcal{H}_{sorted}$). Among all currently eligible zero in-degree clusters, the engine always selects the one with the smallest deterministic key $key(H)=\min_{tx \in H} id(tx)$.

Next, the engine executes Step 2: Intra-Cluster Refinement to resolve the exact transaction sequence within each individual cluster. For a given cluster $H$, FlashOrder computes a \emph{Net-Preference Score} for every internal transaction $tx_i$:
\begin{equation}
Score(tx_i) = \sum_{tx_j \in H, j \neq i} (P_{ij} - P_{ji})
\end{equation}
This score sums the net preference for each transaction within the cluster. Because $|H|$ is small (bounded and typically orders of magnitude below the batch size $m$), this local scoring adds negligible cost. The transactions within the cluster are sorted in descending order of their net scores; if multiple transactions obtain the same score, ties are resolved first by ascending canonical position and then by ascending transaction identifier. Finally, the deterministic state machine concatenates these locally sorted segments to produce the global ordering $\mathcal{L}$. These deterministic tie-breaking rules are invoked only when the ordering evidence is exactly tied, and therefore do not reverse any inter-cluster precedence established by the quotient graph. Within each cluster, the net-preference score provides a deterministic refinement for transactions whose robust precedence relations are mutually contradictory or too weakly separated; its fairness quality is evaluated empirically rather than asserted as an unconditional guarantee inside arbitrary cyclic regions.

\subsection{Complexity Analysis}
At the level of pairwise evidence aggregation, any receive-order fairness protocol operating on a pairwise preference matrix must process $\Theta(m^2)$ relations in the worst case, where $m$ is the batch size. This is because there are $m(m-1)/2$ transaction pairs, and the protocol must collect and process evidence for each pair to determine the relative ordering under the fairness constraint. FlashOrder therefore remains quadratic in the worst case with respect to the input information, while reducing the practical cost of cycle resolution through localized serialization.

However, the empirical speedup over SCC-based protocols (e.g., Themis) stems from restructuring the \emph{systems-level overhead} and data access patterns. Traditional protocols resolve cycles by constructing dense, pointer-based directed graphs and executing Tarjan's SCC condensation. Under adversarial loads, these graphs become highly cyclic, causing high computational cost, unpredictable branch penalties, and deep recursion stacks.

In contrast, FlashOrder isolates the $\Omega(m^2)$ pairwise relation processing purely to sequential, efficient array arithmetic (Phase 1 and 2). It reduces the practical cost of the core topological cycle-resolution engine by replacing expensive, graph traversal with localized canonical sorting and clustering. By replacing combinatorial graph topology with linear 1D sorting, FlashOrder avoids the large constant factors that dominate traditional $O(m^2)$ fair ordering.

\section{Correctness Arguments}
\label{sec:correctness}

This section presents the correctness arguments for FlashOrder. We establish three formal properties: deterministic replayability as an unconditional guarantee, conditional cluster-order fairness under explicit separation conditions, and acyclic exactness in well-separated regimes.

A pair $(tx_i, tx_j)$ is considered \emph{robust} if the observed precedence count satisfies $P_{ij} > \tau$, where $\tau = \max(1, \lfloor n(1-\gamma) \rfloor + f + 1)$. Let $\mathcal{H}^*$ denote the final cluster family after quotient-graph SCC merging, and let $\beta(tx)$ be the position of the final cluster containing $tx$ in the lexicographic topological order.

\begin{proposition}
\label{thm:deterministic_replay}
\textbf{(Deterministic Replayability).}
For fixed parameters and the same certified observation input $\{\mathcal{O}_1,\ldots,\mathcal{O}_n\}$, all honest replicas running Algorithm~\ref{alg:flashorder} output the same total order $\mathcal{L}$.
\end{proposition}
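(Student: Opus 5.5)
The plan is to show that Algorithm~\ref{alg:flashorder} computes a \emph{function} of its inputs. Fix the parameters $(\gamma,f,n,\Phi,\lambda,\mathbb{E}[delay],\Delta t_{gen})$, the certified observations $\{\mathcal{O}_1,\ldots,\mathcal{O}_n\}$, and the batch $\mathcal{T}$ with identifiers $id(\cdot)$. I will argue phase by phase that every intermediate object is uniquely determined by these inputs and by the objects computed earlier. All inputs are identical across honest replicas, because the BFT layer certifies the observation input and the parameters are fixed. Hence the output $\mathcal{L}$ is identical as well. The induction runs over the pipeline stages in order: $\tau$ and $P$, then $CP$, $\mathcal{S}$, $\delta$, $\mathcal{H}$, the edge set $\mathcal{E}$, the merged graph and $\mathcal{H}_{final}$, the keys, $\mathcal{H}_{sorted}$, the scores, and finally $\mathcal{L}$. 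One caveat applies to $\delta$: the ``empirical'' quantities $\mathbb{E}[delay]$ and $\Delta t_{gen}$ must be read as part of the fixed parameters. If they were measured locally, determinism would fail, so I will state this explicitly as covered by the hypothesis.

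\emph{Phases 1 and 2 up to the edge set.} The values $\tau$, $P_{ij}$ and $CP(tx_i)$ are integer-valued counts over the certified input, so no floating-point or ordering issue arises. Sorting by the composite key $(CP,id)$ is a sort under a strict total order, because identifiers are unique. The sorted sequence $\mathcal{S}$ is therefore unique regardless of which sorting algorithm or implementation is used. The clustering loop is a single deterministic left-to-right scan of $\mathcal{S}$ with a fixed comparison against $\delta$, so $\mathcal{H}$ is unique. The edge set $\mathcal{E}$ depends only on the sums $W_{ab}$ and on $\lambda$, and it does not depend on the order in which cluster pairs are enumerated. I will note that the comparison $W_{ab}>\lambda W_{ba}$ must be evaluated identically everywhere, for instance with $\lambda$ given as a fixed rational.

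\emph{SCC merging.} This step needs the most care. The while-loop's outcome could in principle depend on which nontrivial SCC is merged first. I will handle it in two ways. First, the algorithm fixes the selection by a deterministic cluster key, so the loop is a deterministic sequence of operations on identical states. Second, as a robustness remark, the edge update for a merged cluster is defined by the same aggregate rule on the union. Every iteration's state is therefore a function of the previous state, and the loop terminates because the number of vertices strictly decreases. The final acyclic graph and $\mathcal{H}_{final}$ are then unique. The keys $\min_{tx\in H} id(tx)$ are distinct across disjoint clusters. As a result, the lexicographic topological sort, which always picks the zero in-degree cluster of smallest key, has no free choice at any step and yields a unique $\mathcal{H}_{sorted}$.

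\emph{Phase 3.} Each intra-cluster score is an integer function of $P$ restricted to $H$. Sorting by $(Score\ \text{desc.}, CP\ \text{asc.}, id\ \text{asc.})$ is again a strict total order because of the unique identifiers, so each $\mathcal{L}_H$ is unique. Concatenating these segments in the unique order $\mathcal{H}_{sorted}$ gives a unique $\mathcal{L}$. The main obstacle is making precise that no step admits implementation-dependent choice. This concerns the SCC-selection rule, the tie-breaking in topological sort, and numerical evaluation of the $\lambda$ and $\delta$ comparisons. I would address it by proving that, at each such point, the algorithm is specified by a strict total order or by exact arithmetic.
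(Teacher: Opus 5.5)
Your proposal is correct and takes essentially the same approach as the paper. The paper argues that every stage of Algorithm~\ref{alg:flashorder} is a deterministic function of the certified input and the fixed parameters: computing $P$, the CP sort, the clustering scan, the quotient edges with SCC merging, the topological traversal, and the intra-cluster sorting. Your version makes explicit what that one-paragraph proof leaves implicit: unique identifiers turn every sort key into a strict total order, the distinct keys $\min_{tx\in H} id(tx)$ remove all choice from the topological sort, the merge loop terminates, and the empirical inputs to $\delta$ and the $\lambda$ comparison must be fixed and evaluated identically on every replica.
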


\begin{proof}
The preference matrix $P$ is a pure function of the certified observation sequences. Canonical positions are computed from $P$ using a fixed threshold and are sorted with a deterministic identifier tie-breaker. Geometric clustering scans the sorted sequence deterministically. Quotient edges are computed from aggregate weights and a fixed margin, and nontrivial SCCs are merged using a deterministic cluster key. The final topological traversal and intra-cluster score sorting also use fixed tie-breaking rules. Hence every step is a deterministic function of the same input and parameters, so all honest replicas reproduce the same $\mathcal{L}$.
\end{proof}

\begin{proposition}
\label{thm:conditional_cluster_fairness}
\textbf{(Conditional Cluster-Order Fairness).}
Assume $n > \frac{4f}{2\gamma - 1}$. For every robust pair $(tx_i,tx_j)$ with $P_{ij}>\tau$, suppose the final clusters $H_i^*,H_j^*\in\mathcal{H}^*$ satisfy one of the following conditions: (i) $H_i^*=H_j^*$, or (ii) the final quotient graph contains a directed path from $H_i^*$ to $H_j^*$. Then FlashOrder satisfies the batch-level relation $\beta(tx_i)\leq\beta(tx_j)$ for that pair. If the condition holds for every robust pair, FlashOrder satisfies the corresponding cluster-level batch-order property.
\end{proposition}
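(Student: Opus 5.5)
The argument is essentially structural: once the quotient-graph SCC merging in Phase~2 has terminated, the final cluster family $\mathcal{H}^*$ partitions $\mathcal{T}$. The resulting graph $\mathcal{G}_{HG}$ on $\mathcal{H}^*$ is a DAG, and $\beta(tx)$ is the position of the cluster containing $tx$ in the output of \texttt{LexicographicTopologicalSort}. The claimed relation $\beta(tx_i)\le\beta(tx_j)$ therefore follows from two facts: the merging loop indeed produces a DAG, and any topological order respects reachability. I will fix a robust pair $(tx_i,tx_j)$ with $P_{ij}>\tau$ and split on the two hypotheses.

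\emph{Acyclicity and well-definedness.} I first verify that $\beta$ is well defined. Each iteration of the \textbf{while} loop replaces a nontrivial SCC by a single vertex, so the number of vertices strictly decreases. Hence the loop terminates, and it can terminate only when no nontrivial SCC remains, i.e., when $\mathcal{G}_{HG}$ is acyclic. Because the merges take unions of disjoint clusters, $\mathcal{H}^*$ is still a partition of $\mathcal{T}$. Every transaction therefore lies in exactly one final cluster, and the lexicographic topological sort assigns each cluster a unique position. By Proposition~\ref{thm:deterministic_replay}, this assignment is the same at every honest replica.

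\emph{Case (i).} If $H_i^*=H_j^*$, then the two transactions share a cluster, so $\beta(tx_i)=\beta(tx_j)$ and the inequality is immediate. The intra-cluster score ordering plays no role, since $\beta$ is defined at cluster granularity. This is exactly the $\gamma$-batch semantics of Definition~\ref{def:batch_order_fairness}, which gives no guarantee within a batch.

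\emph{Case (ii).} Suppose there is a directed path $H_i^*=C_0\to C_1\to\cdots\to C_r=H_j^*$ in the final quotient graph. I will show by induction along the path that $\mathrm{pos}(C_{\ell})<\mathrm{pos}(C_{\ell+1})$. The key step is the standard property of Kahn-style traversal: a vertex is emitted only when its in-degree among the not-yet-emitted vertices is zero. Consequently, for every edge $C_\ell\to C_{\ell+1}$, the cluster $C_\ell$ must be emitted before $C_{\ell+1}$. The lexicographic key $\min_{tx\in H} id(tx)$ only chooses among currently eligible vertices, so it never violates this property. Chaining the inequalities along the path gives $\beta(tx_i)<\beta(tx_j)$.

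The final sentence of the proposition, that the cluster-level property holds when every robust pair satisfies (i) or (ii), follows by quantifying the per-pair result over all robust pairs.

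The assumption $n>\frac{4f}{2\gamma-1}$ enters only through Lemma~\ref{lem:bft_threshold} and the $\lambda$-interval derivation. It guarantees that the edge rule $W_{ab}>\lambda W_{ba}$ with $1<\lambda<\frac{\gamma n-f}{(1-\gamma)n+f}$ is well posed. I do not expect to need it for the ordering implication itself.

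The main obstacle I anticipate is precision rather than depth. I must make sure $\beta$ is defined with respect to the \emph{final} graph after all merges, and that merged edges are recomputed by the ``Update edges'' step rather than left stale. Otherwise a path that exists before merging might not correspond to the emitted order. I will handle this by stating explicitly that conditions (i) and (ii) refer to $\mathcal{H}^*$ and the post-merge edge set, which is exactly the graph the sort consumes.
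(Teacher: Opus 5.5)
Your proposal is correct and follows the paper's proof. Case (i) gives $\beta(tx_i)=\beta(tx_j)$ directly, and case (ii) uses that the post-merge quotient graph is acyclic, so any topological order places the source of a directed path before its sink. Your extra details make explicit what the paper asserts in one line: the merge loop terminates because the vertex count strictly decreases, the final clusters still partition $\mathcal{T}$, and the Kahn-style argument works edge by edge. You are also right that the hypothesis $n>\frac{4f}{2\gamma-1}$ is not used in the ordering implication.
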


\begin{proof}
If $H_i^*=H_j^*$, both transactions are in the same final cluster, so $\beta(tx_i)=\beta(tx_j)$. Otherwise, by assumption the final quotient graph contains a path from $H_i^*$ to $H_j^*$. The SCC-merge step leaves the quotient graph acyclic, and any topological ordering of an acyclic graph places the source of a directed path before its sink. Therefore $\beta(tx_i)<\beta(tx_j)$. Applying the same argument to every robust pair gives the cluster-level batch-order property.
\end{proof}

\begin{proposition}
\label{thm:acyclic_exactness}
\textbf{(Acyclic Exactness under Separation).}
If (a)~the robust preference graph is acyclic, (b)~the CP ordering is a valid topological order of that robust graph, (c)~the clustering threshold $\delta$ does not merge any pair separated by a robust edge, and (d)~the initial quotient graph (before SCC merging) is acyclic, then FlashOrder outputs a linear order satisfying every robust pair.
\end{proposition}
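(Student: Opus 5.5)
The plan is to follow one robust pair $(tx_i,tx_j)$ with $P_{ij}>\tau$ through the three phases and show that $tx_i$ ends up strictly before $tx_j$ in $\mathcal{L}$. The conclusion needs exactness, not merely batch order, so the pair has to be separated at the cluster level. Hypothesis (c) guarantees this separation, and the remaining work is to make sure the quotient graph orders the two clusters correctly.

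\emph{Step 1 (placement in $\mathcal{S}$).} By (a) and (b), the CP order is a topological order of the robust graph. Since $tx_i\to tx_j$ is a robust edge, $tx_i$ appears before $tx_j$ in $\mathcal{S}$. (For the given CP definition this is immediate in one direction: every robust predecessor of $tx_i$ is, by transitivity along robust paths, also relevant to $tx_j$. I will simply invoke (b) rather than rederive it.)

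\emph{Step 2 (separation by clustering).} Clusters are contiguous intervals of $\mathcal{S}$. Condition (c) says no robust pair is placed in the same initial cluster. So $tx_i\in H_a$ and $tx_j\in H_b$ with $a\neq b$, and by contiguity every member of $H_a$ precedes every member of $H_b$ in $\mathcal{S}$.

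\emph{Step 3 (no merging).} By (d), the initial quotient graph is acyclic. The SCC-merge loop therefore performs no iterations, and $\mathcal{H}^*=\mathcal{H}$, so the clusters from Step 2 are the final clusters.

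\emph{Step 4 (cluster order).} I need $\beta(tx_i)<\beta(tx_j)$, that is, $H_a$ is serialized before $H_b$. This is the main obstacle. The lexicographic topological sort respects edges, but if $H_a$ and $H_b$ are incomparable in $\mathcal{G}_{HG}$, it orders them by $key(H)=\min id$, which need not agree with the $\mathcal{S}$ order. I would try two routes. The first is to show that whenever a robust pair crosses from $H_a$ to $H_b$, the aggregate weights satisfy $W_{ab}>\lambda W_{ba}$. This would use $n>4f/(2\gamma-1)$ together with the $\lambda$ upper bound derived in Phase~2, giving an edge $H_a\to H_b$, after which Proposition~\ref{thm:conditional_cluster_fairness} (case ii) applies. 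However, the aggregate bound can fail when the two clusters also contain many weakly reversed pairs. The fallback is to read (d) together with (b) as saying the quotient order is induced by $\mathcal{S}$: every robust edge between distinct clusters yields a quotient edge, so Proposition~\ref{thm:conditional_cluster_fairness} applies directly. I expect the written proof will need this interpretation, stated explicitly as part of the separation hypothesis.

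\emph{Step 5 (linear order).} Because $\beta(tx_i)<\beta(tx_j)$ strictly, concatenating the cluster segments places $tx_i$ before $tx_j$ no matter how each cluster is sorted internally by score. Applying this to every robust pair shows that $\mathcal{L}$ satisfies every robust pair. Proposition~\ref{thm:deterministic_replay} then adds that this order is the same at all honest replicas.
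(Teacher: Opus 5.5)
Your outline matches the paper's proof: (c) splits robust pairs, (d) rules out merging, the topological order respects quotient edges, and intra-cluster sorting is then irrelevant. You are also right that Step~4 is the crux. The paper's proof simply asserts that, ``combined with (c), each robust edge maps to a directed inter-cluster edge,'' without ever examining $W_{ab}$ or $\lambda$. Your proposal does not close this step either. The fallback turns it into an extra hypothesis, which assumes away the only nontrivial part of the statement, so there is a genuine gap.

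The missing argument is your first route, and the obstacle you raise against it is excluded by (b). Clusters are contiguous intervals of $\mathcal{S}$. So if $H_a$ precedes $H_b$, every $tx_{i'}\in H_a$ precedes every $tx_{j'}\in H_b$ in $\mathcal{S}$, and (b) forbids a robust edge $tx_{j'}\to tx_{i'}$, i.e.\ $P_{j'i'}\le\tau$. Because each $\mathcal{O}_k$ totally orders $\mathcal{T}$, $P_{i'j'}=n-P_{j'i'}\ge n-\tau$. Since $\tau<n/2$ throughout the admissible range when $f\ge 1$, there are no weakly reversed cross pairs at all. Summing over $H_a\times H_b$,
\[
W_{ab}\ \ge\ (n-\tau)\,|H_a|\,|H_b|, \qquad W_{ba}\ \le\ \tau\,|H_a|\,|H_b|,
\]
so $W_{ab}>\lambda W_{ba}$ as soon as $\lambda\tau<n-\tau$. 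Under that condition the initial quotient graph contains $H_a\to H_b$ for \emph{every} pair of clusters in $\mathcal{S}$-order. It is therefore a transitive tournament, so (d) holds automatically. The lexicographic topological sort then has exactly one eligible cluster at each step and emits clusters in $\mathcal{S}$-order. With (c), each robust pair is split forward across two clusters, and your Step~5 finishes.

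The condition actually needed is $\lambda<(n-\tau)/\tau$, not the feasibility bound $n>4f/(2\gamma-1)$ or the Phase~2 interval you planned to use. In fact the Phase~2 bound $\lambda<(\gamma n-f)/((1-\gamma)n+f)$ is slightly too weak, because $\tau=\lfloor n(1-\gamma)\rfloor+f+1>(1-\gamma)n+f$. For a concrete failure, take $m=2$, $n=101$, $f=15$, $\gamma=0.8$ (so $\tau=36$), $P_{12}=65$, $P_{21}=36$, $\delta\le 1$, and $\lambda=1.85$. This $\lambda$ lies inside the paper's admissible interval, whose upper end is about $1.869$. Then (a)--(d) all hold, but neither $65>1.85\cdot 36$ nor $36>1.85\cdot 65$, so there is no quotient edge. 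If $id(tx_2)<id(tx_1)$, the output is $tx_2,tx_1$, which violates the robust pair. The correct repair is therefore the explicit quantitative hypothesis $\lambda\tau<n-\tau$, rather than the blanket assumption that robust cross pairs induce quotient edges. The deployed $\lambda=1.2$ meets it in this configuration, since $1.2\cdot 36=43.2<65$.
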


\begin{proof}
Condition~(c) ensures every robust edge connects distinct clusters. Condition~(d) ensures no SCC merging occurs, so the initial cluster assignment is the final one. By~(a) and~(b), the robust edges form a partial order consistent with the CP ordering; combined with~(c), each robust edge maps to a directed inter-cluster edge in the quotient graph whose direction is consistent with the robust preference. Since the quotient graph is acyclic~(d), the lexicographic topological sort returns a linear extension that preserves all quotient-graph edges and therefore all robust edges. Intra-cluster sorting cannot reverse any robust edge because, by~(c), no robust pair shares a cluster. Hence every robust pair appears in the output order in the required direction.
\end{proof}

FlashOrder also supports replayable order verification~\cite{mu2024speedyfair}: the leader attaches the $n-f$ signed observation sequences as a certificate alongside $\mathcal{L}$, and each honest replica replays the same certified input through its local engine. Any mismatch leads to rejection by the consensus layer.

\section{Evaluation}
\label{sec:evaluation}

\subsection{Experimental Setup}
\label{sec:eval_setup}
We implement FlashOrder on top of the open-source libhotstuff codebase~\cite{yin2019hotstuff} and compare it with HotStuff (no fairness), Themis~\cite{kelkar2023themis}, and Rashnu~\cite{nagda2024rashnu}. We select Themis and Rashnu as baselines because they are the only open-source, HotStuff-integrated fair-ordering engines sharing the same pairwise-evidence model, enabling apples-to-apples comparison.

Deployment experiments run on CloudLab~\cite{cloudlab} c6525-25g bare-metal machines (16-core AMD EPYC 7302P, 128\,GB RAM, 25\,Gbps Ethernet). We use the SmallBank benchmark with extreme Zipfian skew (SB\_SKEW=0.99) and high interaction probability (SB\_PROB=0.95), producing dense write-write conflicts that mirror real-world MEV arbitrage topologies. For fairness-quality evaluation (Section~\ref{sec:eval_fairness_quality}), we use algorithm-level simulation to evaluate larger network sizes and controlled adversarial sweeps beyond CloudLab scale.

\subsection{Evaluation Questions}
\begin{itemize}[leftmargin=*]
    \item \textbf{\S~\ref{sec:eval_deployment}}: Can FlashOrder deliver high throughput and low latency in an end-to-end BFT prototype?
    \item \textbf{\S~\ref{sec:eval_fairness_param}}: Does FlashOrder remain stable under stricter robust-pair admission?
    \item \textbf{\S~\ref{sec:eval_fairness_quality}}: Does FlashOrder preserve strong fairness quality under the same thresholded pairwise objective?
    \item \textbf{\S~\ref{sec:eval_condorcet}}: Does FlashOrder remain robust under adversarial ordering-manipulation attacks?
\end{itemize}

\subsection{End-to-End Performance in Deployment}
\label{sec:eval_deployment}
\subsubsection{Scaling with Replica Count}
\label{sec:eval_network}
\begin{figure}[!ht]
    \centering
    \includegraphics[width=\linewidth]{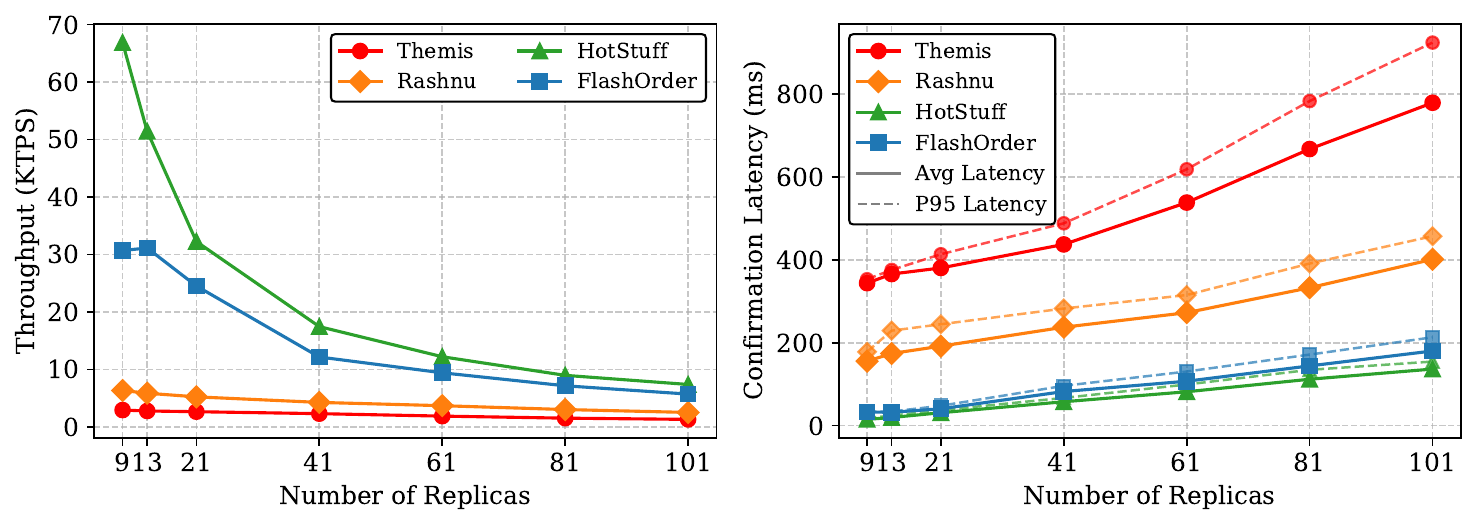}
    \caption{End-to-end throughput and latency under varying network scale ($n=9$ to $101$).}
    \label{fig:combined_performance}
\end{figure}

We evaluate whether FlashOrder preserves end-to-end practicality as the deployment scale grows. We vary the network size from $n=9$ to $n=101$ replicas and compare FlashOrder with the HotStuff baseline as well as the two state-of-the-art fair-ordering protocols, Themis and Rashnu. The results in Fig.~\ref{fig:combined_performance} show that FlashOrder maintains stable performance as the network grows, with a much smaller throughput gap to the non-fairness baseline than either Themis or Rashnu.

\textbf{Scalability analysis.} As shown in Fig.~\ref{fig:combined_performance}, the vanilla HotStuff baseline establishes the upper bound for performance because it does not enforce fair ordering. At $n=9$, HotStuff achieves a peak throughput of $66{,}853$ txs/s with an average latency of $14.9$ ms (p95: $18.1$ ms). FlashOrder follows closely, achieving $30{,}680$ txs/s with an average latency of $32.6$ ms (p95: $33.8$ ms). In this small-scale setting, FlashOrder achieves a $10.5\times$ throughput speedup over Themis ($2{,}920$ txs/s) and a $4.8\times$ improvement over Rashnu ($6{,}330$ txs/s). As the network expands to $101$ nodes, overall throughput naturally declines due to increased communication overhead in the BFT consensus. Even so, FlashOrder sustains $5{,}696$ txs/s compared to the $7{,}372$ txs/s of the baseline, translating to a throughput retention of $77.2\%$ at peak node count, while its average latency reaches $180.0$ ms (p95: $213.0$ ms, compared to the baseline's $136.4$ ms / $154.4$ ms). In contrast, Themis and Rashnu drop to $1{,}299$ txs/s and $2{,}504$ txs/s, respectively, with Themis's latency spiking to $779.0$ ms (p95: $924.2$ ms) and Rashnu reaching $401.1$ ms (p95: $457.0$ ms).

These results show that FlashOrder retains most of the practical efficiency of HotStuff while scaling much better than SCC-based fair-ordering engines. The gap widens with replica count: FlashOrder keeps the ordering path localized, while Themis and Rashnu pay for global graph processing as cyclic ambiguity grows.

\subsubsection{Performance under Challenging Operating Conditions}
\label{sec:eval_batch}

\begin{figure*}[!t]
    \centering
    \includegraphics[width=\linewidth]{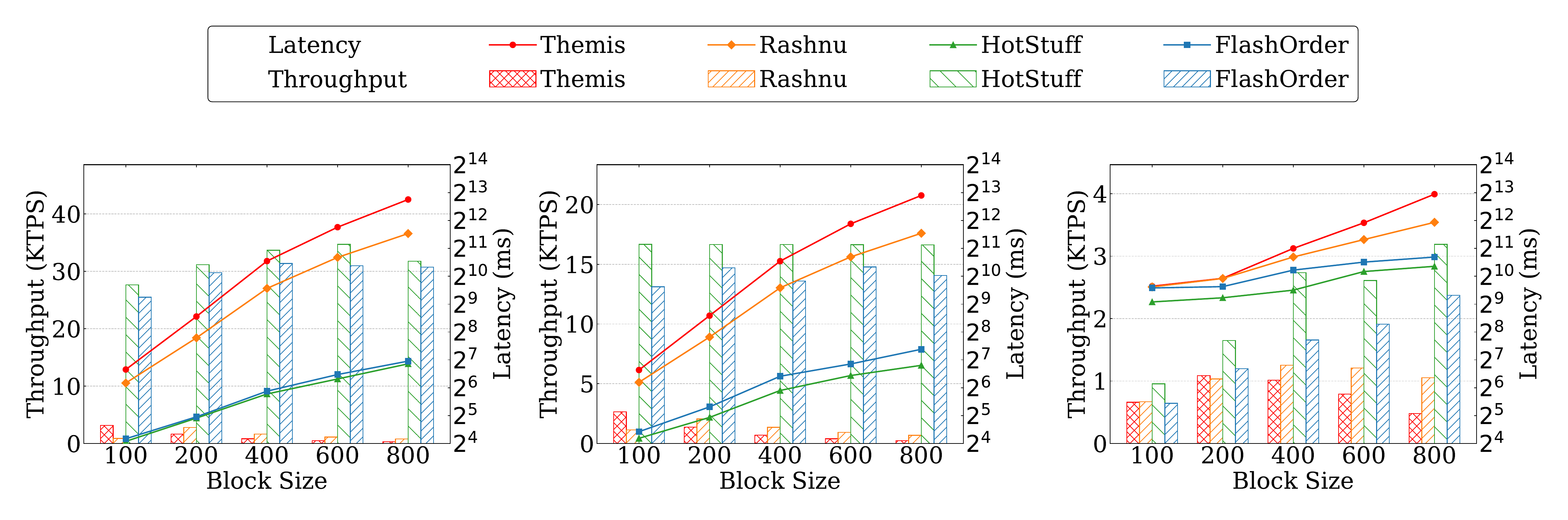}
    \caption{Throughput and latency under varying batch sizes ($n=21$): (a) Normal load, (b) Heavy contention load, (c) Geo-distributed setting with 50\,ms inter-node latency.}
    \label{fig:blocksize_performance}
\end{figure*}

We test whether FlashOrder's advantage persists precisely in the regimes that amplify cyclic ambiguity on the critical path. We evaluate the protocols under three scenarios with $n=21$ replicas, varying the block size from 100 to 800: (a)~normal load in LAN, (b)~heavy contention load in LAN ($SB\_SKEW{=}0.99, SB\_PROB{=}0.95$), and (c)~geo-distributed deployment with 50\,ms inter-node latency. As shown in Fig.~\ref{fig:blocksize_performance}, FlashOrder's advantage over Themis widens as contention and latency increase.
\textbf{Normal load.} Under normal load (Fig.~\ref{fig:blocksize_performance}a), FlashOrder closely tracks the HotStuff baseline, achieving $31.37$ KTPS at block size 400 ($93\%$ of baseline) with latency of $58.7$\,ms, compared to the baseline's $54.8$\,ms. Themis already exhibits significant degradation: at block size 800, Themis drops to $0.31$ KTPS with latency spiking to $6{,}905$\,ms.
\textbf{Heavy contention.} Under heavy contention (Fig.~\ref{fig:blocksize_performance}b), Themis's throughput degrades sharply: at block size 800, its throughput drops to $0.24$ KTPS with latency reaching $7,636$\,ms. FlashOrder maintains $13.60$ KTPS at block size 400 and achieves $166.6$\,ms latency at block size 800, delivering over $56\times$ higher throughput than Themis in this extreme stress test.
\textbf{Geo-distributed WAN.} In a geo-distributed setting with 50\,ms latency (Fig.~\ref{fig:blocksize_performance}c), the updated results show a clear separation at larger block sizes. At block size 800, FlashOrder reaches $2{,}373$ TPS with average latency $1{,}650$\,ms (p95: $3{,}608$\,ms), while Themis drops to $480$ TPS with $7{,}880$\,ms latency (p95: $11{,}667$\,ms) and Rashnu reaches $1{,}053$ TPS with $3{,}915$\,ms latency (p95: $4{,}883$\,ms). In this geo-distributed scenario, FlashOrder achieves $4.94\times$ and $2.25\times$ throughput gains over Themis and Rashnu, respectively. Compared with the HotStuff baseline at block size 800 ($3{,}189$ TPS, $1{,}311$\,ms), FlashOrder retains $74.4\%$ throughput with only $1.26\times$ latency overhead.

Taken together, these deployment results support the central systems claim of the paper: FlashOrder's advantage is largest when contention, batch size, and WAN delay make global SCC-style cycle resolution most expensive. Instead of allowing ambiguity to expand into a batch-wide graph bottleneck, FlashOrder confines it to localized clustering and serialization.

\subsection{Sensitivity to Order-Fairness Parameter}
\label{sec:eval_fairness_param}

\begin{figure*}[!t]
    \begin{minipage}{0.32\linewidth}
        \centering
        \includegraphics[width=\linewidth]{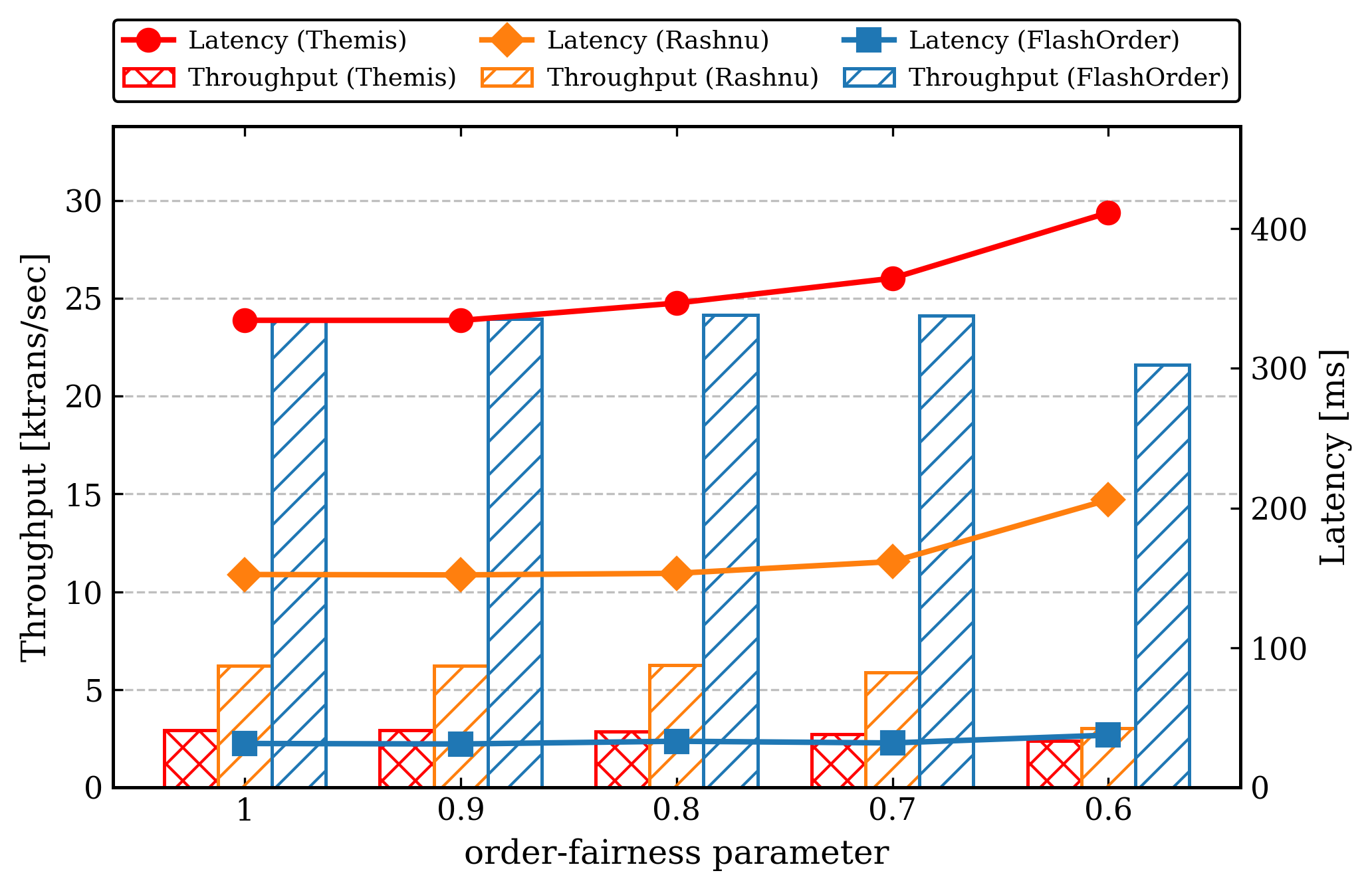}
        \caption{Throughput and latency under varying order-fairness parameter $\gamma$.}
        \label{fig:plot_fairness_combo}
    \end{minipage}\hfill
    \begin{minipage}{0.32\linewidth}
        \centering
        \includegraphics[width=\linewidth]{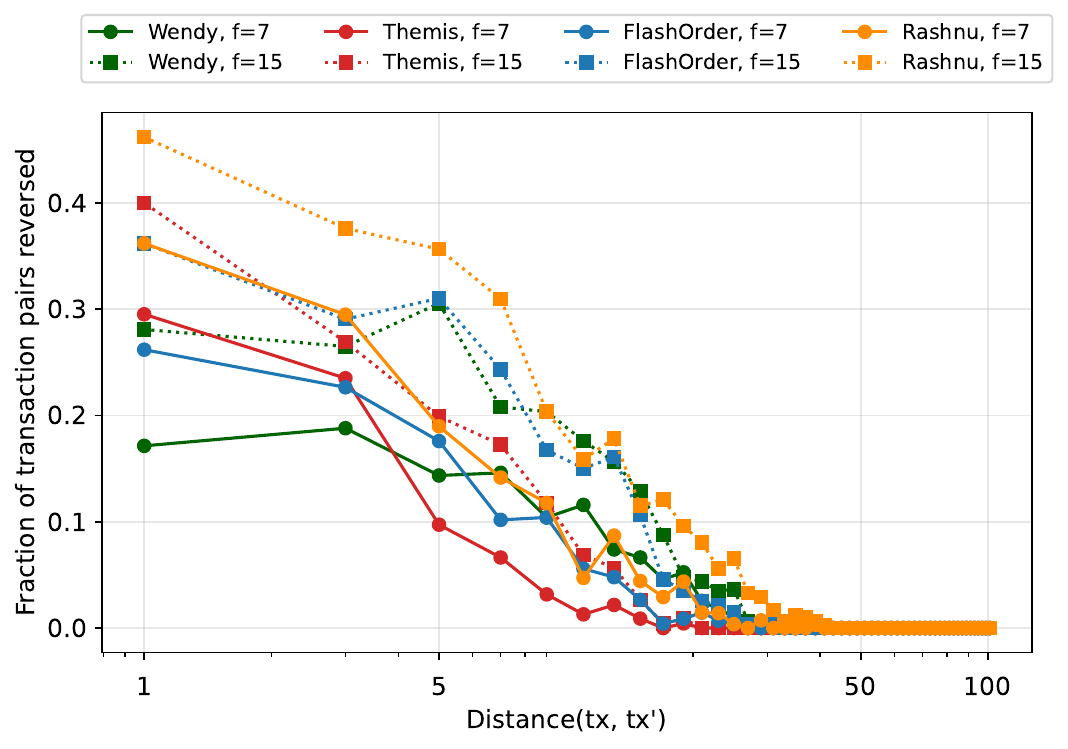}
        \caption{Ordering resilience and defensive stability under direct manipulation.}
        \label{fig:adv_reorder}
    \end{minipage}\hfill
    \begin{minipage}{0.32\linewidth}
        \centering
        \includegraphics[width=\linewidth]{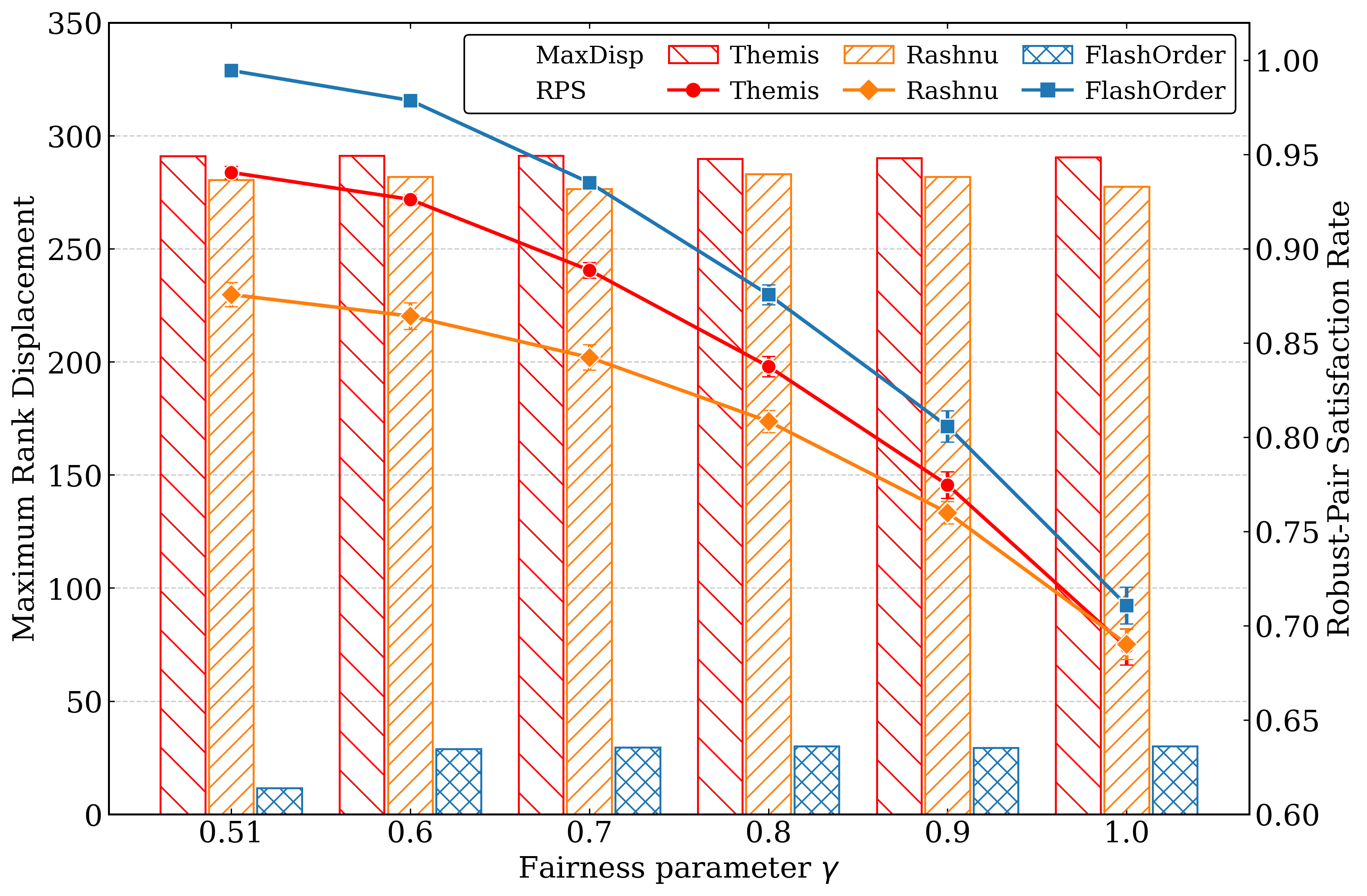}
        \caption{Adversarial fairness metrics ($f{=}2$): RPS (right axis) and MaxDisp (left axis).}
        \label{fig:fairness_metrics}
    \end{minipage}
\end{figure*}

We evaluate whether FlashOrder remains stable as lower $\gamma$ values impose stricter robust-pair admission in the thresholded pairwise precedence structure. We test FlashOrder, Themis, and Rashnu under identical network settings while progressively scaling $\gamma$ down from 1.0 to 0.6. Under the threshold $\tau=\lfloor n(1-\gamma)\rfloor+f+1$, lowering $\gamma$ raises the evidence threshold required to declare a precedence relation robust. This stricter regime reduces tolerance for weak receive-order evidence and makes the ordering engine more sensitive to the remaining high-confidence cyclic structure produced by contention and Condorcet-style manipulation. As depicted in Fig.~\ref{fig:plot_fairness_combo}, the performance of Themis degrades sharply as $\gamma$ decreases: its throughput drops from $3.0\times 10^3$ txs/s to $2.4\times 10^3$ txs/s, while its latency spikes non-linearly from 334.3\,ms to over 411.4\,ms. Although Rashnu reduces the sorting volume by constructing graphs only for data-dependent transactions, it still relies on graph traversal and cycle elimination when the stricter threshold leaves hard high-confidence cycles in the dependency structure. Consequently, Rashnu incurs approximately a $35\%$ latency penalty (reaching 205.8\,ms) under severe topological stress.

In contrast, FlashOrder shows strong performance stability. Even under the stricter robust-pair admission regime induced by $\gamma=0.6$, FlashOrder maintains a burst throughput of over $21.6\times 10^3$ txs/s, with its end-to-end latency remaining within 31.4--37.3\,ms. This result indicates that the localized execution path remains stable even when the protocol admits only stronger pairwise evidence: rather than paying for larger global cycle-resolution routines, FlashOrder continues to resolve the remaining ambiguity within bounded clusters.
\subsection{Fairness Quality under Thresholded Pairwise Constraints}
\label{sec:eval_fairness_quality}

We evaluate whether FlashOrder preserves strong fairness quality under the same thresholded pairwise objective used by prior work. We report two complementary metrics: the robust-pair satisfaction rate (RPS), which measures how often robust precedence constraints are respected, and maximum rank displacement (MaxDisp), which captures worst-case output distortion. Together, these metrics let us evaluate both constraint satisfaction and the bounded-unfairness style quantity that captures how much positional distortion an adversary can force \cite{kiayias2024ordering}.

\subsubsection{Robust-Pair Satisfaction Rate}
Beyond throughput and latency, we directly quantify fairness quality using two complementary metrics. The first is the \emph{Robust-Pair Satisfaction Rate (RPS)}, which measures the fraction of strongly agreed-upon transaction pairs correctly ordered in the protocol output. Formally, a pair $(tx_i, tx_j)$ is called \emph{robust} if $P_{ij} > \tau$, where $\tau = \max(1, \lfloor n(1-\gamma) \rfloor + f + 1)$ is the consensus threshold. Given a protocol output $\sigma^*$, the robust-pair satisfaction rate is defined as:
\begin{align}
\text{RPS} &= \frac{|\{(i,j) : P_{ij} > \tau \;\land\; \sigma^{*}(i) < \sigma^{*}(j)\}|}{|\{(i,j) : P_{ij} > \tau\}|}, \label{eq:rps}\\
&\quad \text{with }\text{RPS} := 1 \text{ if denominator is } 0. \nonumber
\end{align}
We simulate an adversarial setting with $f=2$ Byzantine nodes and 30 inducing transactions, reporting results at $\gamma=0.8$ (recommended operating point) and $\gamma=0.51$ (strictest BFT-admissible setting, requiring $n > 400$).
\begin{table}[t]
\centering
\caption{Fairness metrics under adversarial simulation ($f{=}2$, 30 inducing transactions, 20 trials).}
\label{tab:fairness}
\vspace{2pt}
\begin{small}
\begin{tabular}{@{}ll rrrrrr@{}}
\toprule
Protocol & $\gamma$ & RPS$\uparrow$ & MaxDisp$\downarrow$ & Cov. & Cov.$_{c}$ & MaxGrp$\downarrow$ \\
\midrule
Themis & 0.80 & $0.813$ & $301$ & $0.212$ & $0.212$ & $339$ \\
Rashnu & 0.80 & $0.790$ & $287$ & $0.462$ & $0.816$ & $4$ \\
\textbf{FlashOrder} & \textbf{0.80} & $\mathbf{0.873}$ & $\mathbf{34}$ & $\mathbf{0.815}$ & $\mathbf{0.812}$ & $\mathbf{54}$ \\
\midrule
Themis & 0.51 & $0.915$ & $300$ & $0.894$ & $0.895$ & $59$ \\
Rashnu & 0.51 & $0.853$ & $291$ & $0.507$ & $0.915$ & $2$ \\
\textbf{FlashOrder} & \textbf{0.51} & $\mathbf{0.996}$ & $\mathbf{10}$ & $\mathbf{0.910}$ & $\mathbf{0.908}$ & $\mathbf{70}$ \\
\midrule
Themis & 1.00 & $0.672$ & $301$ & $0.185$ & $0.186$ & $339$ \\
Rashnu & 1.00 & $0.675$ & $288$ & $0.385$ & $0.676$ & $5$ \\
\textbf{FlashOrder} & \textbf{1.00} & $\mathbf{0.704}$ & $\mathbf{87}$ & $\mathbf{0.590}$ & $\mathbf{0.588}$ & $\mathbf{136}$ \\
\bottomrule
\multicolumn{7}{@{}l}{\footnotesize Cov.: quotient-/condensation-path coverage of all robust pairs.}\\
\multicolumn{7}{@{}l}{\footnotesize Cov.$_{c}$: coverage restricted to conflicting pairs. MaxGrp: largest cluster/SCC size.}\\
\end{tabular}
\end{small}
\end{table}

\textbf{RPS advantage.} As shown in Table~\ref{tab:fairness}, FlashOrder achieves the highest RPS across all configurations. At the recommended $\gamma=0.8$, FlashOrder's RPS of $0.873$ represents a $7.4\%$ relative improvement over Themis ($0.813$) and a $10.5\%$ improvement over Rashnu ($0.790$). At the strict $\gamma=0.51$ setting with $n=401$ nodes, FlashOrder attains near-perfect robust-pair satisfaction with RPS $= 0.996$, while Themis and Rashnu remain at $0.915$ and $0.853$.

\textbf{Maximum Rank Displacement.}
The second fairness metric is \emph{Maximum Rank Displacement (MaxDisp)}, the worst-case positional shift between the protocol output and ground-truth ordering. In our simulation, the ground-truth reference order is the ascending send-time order used to generate the attack scenario. Whereas RPS measures aggregate fairness compliance, MaxDisp captures the worst-case attack surface available to an adversary.

\textbf{MaxDisp advantage.} MaxDisp reveals the most striking advantage of FlashOrder. At $\gamma=0.8$, FlashOrder's worst-case displacement is $34$ positions ($8.5\%$ of the 400-transaction batch), compared to $301$ ($75.3\%$) for Themis and $287$ for Rashnu, an $88.7\%$ reduction (i.e.\ $(301{-}34)/301$) in the attacker's maximum exploitable displacement, or equivalently an $8.9\times$ reduction. At $\gamma=0.51$ with $n=401$, this advantage grows to $30\times$ ($10$ vs $300$ positions). This is the same qualitative notion of positional distortion targeted by bounded-unfairness formulations \cite{kiayias2024ordering}; FlashOrder improves it not by searching for an optimal order inside a global SCC, but by preventing ambiguity from expanding into such a region in the first place.

\textbf{Structural decomposition.}
To understand \emph{why} FlashOrder achieves these gains, Table~\ref{tab:fairness} decomposes each protocol's robust-pair handling into two structural mechanisms: (1)~\emph{absorbed}---robust pairs that land in the same cluster or SCC, where ordering information is effectively lost; and (2)~\emph{covered} (Cov.)---robust pairs whose clusters/SCCs differ but the quotient or condensation DAG preserves a one-way path, indicating genuine structural preservation. The MaxGrp column reports the size of the largest cluster (FlashOrder) or SCC (Themis/Rashnu). At $\gamma=0.8$, Themis collapses $339$ out of $400$ transactions ($85\%$) into a single SCC (MaxGrp$=339$), so its Cov. is only $0.212$---most robust pairs are absorbed rather than genuinely preserved. FlashOrder's largest cluster contains only $54$ transactions ($13.5\%$), yet achieves Cov.$=0.815$ through quotient-path coverage. Rashnu keeps SCCs tiny (max $4$) by restricting edge construction to conflicting pairs, but leaves $54\%$ of all robust pairs structurally uncovered (Cov.$=0.462$); its conflict-restricted coverage Cov.$_{c}=0.816$ slightly exceeds FlashOrder's $0.812$ ($0.4\%$ difference, within statistical noise), which is expected: Rashnu exclusively optimizes for conflicting pairs while FlashOrder's hypergraph construction must simultaneously resolve non-conflicting robust pairs. FlashOrder achieves this near-parity on conflict pairs \emph{while} extending structural coverage to all robust pairs (Cov.$=0.815$ vs.\ Rashnu's $0.462$). Under the strict $\gamma=0.51$ setting, Themis's SCC collapse becomes far less severe (MaxGrp drops to $59$) and its Cov. improves to $0.894$, confirming that stricter robust-pair admission naturally limits SCC size; however, FlashOrder still leads with Cov.$=0.910$ and near-perfect RPS$=0.996$. At $\gamma=1.0$, all protocols see RPS decline ($0.67$--$0.70$), but structural differences persist: Themis again collapses to MaxGrp$=339$ with Cov.$=0.185$, while FlashOrder maintains Cov.$=0.590$ ($3.2\times$ higher) and MaxDisp$=87$ ($3.5\times$ lower than Themis's $301$), confirming that the quotient-path advantage holds across all parameter regimes.

\textbf{Scalability effect.} Moving from the recommended setting $(\gamma=0.8)$ to the stricter BFT-admissible setting $(\gamma=0.51, n=401)$, FlashOrder's fairness lead further amplifies: RPS increases from $0.873$ to $0.996$, while MaxDisp decreases from $34$ to $10$. Conversely, Themis's MaxDisp remains at approximately $300$ regardless of network scale. This suggests that FlashOrder not only avoids the execution bottleneck of SCC-based ordering, but also converts stricter robust-pair admission and larger-scale evidence aggregation into better fairness quality. Fig.~\ref{fig:fairness_metrics} illustrates how these metrics vary continuously across different $\gamma$ values.

\subsection{Robustness under Condorcet Attacks}
\label{sec:eval_condorcet}

\textbf{Attack setting.} We further evaluate adversarial service robustness under explicit ordering-manipulation attacks in both simulation and real prototype deployment.

For the simulation-side attack study, we configure a 101-node network with $f=15$ Byzantine nodes and $\gamma=0.8$. \emph{This satisfies the BFT threshold $n > \frac{4f}{2\gamma-1} = \frac{60}{0.6} = 100$, placing FlashOrder within the threshold range considered by our fairness analysis.} The Byzantine nodes systematically broadcast conflicting inverted transaction sequences to distinct subsets of honest nodes, increasing the length and density of dependency cycles. For the deployment-side attack study, we use a geo-distributed setting with 50\,ms inter-node latency and inject explicit Condorcet cycles by splitting malicious clients into three groups that submit contradictory local orders (A$\!>\!$B, B$\!>\!$C, C$\!>\!$A) while background SmallBank traffic continues. We sweep attack cycle length $\{3,10,20\}$ and report two representative fairness endpoints, $\gamma\in\{0.6,1.0\}$.

\textbf{Ordering resilience.} According to the distance metric
\[
Dist(tx_i, tx_j)=\left|\#\{r: tx_i \prec_{\mathcal{O}_r} tx_j\}-\#\{r: tx_j \prec_{\mathcal{O}_r} tx_i\}\right|,
\]
which measures the receive-order vote margin between the two relative orders, the protocol's ability to preserve the relative ordering of honest transactions is highlighted by the sharp decline in reversal rates as distance increases. In the simulation, transaction pairs are bucketed by the equivalent minority-count representation and displayed using the corresponding vote-margin labels. As shown in Fig.~\ref{fig:adv_reorder}, at the most ``fragile'' point of $Dist=1$, FlashOrder limits successfully flipped pairs to 49 in standard scenarios and 63 under peak adversarial load. This protection is maintained while achieving a 23.15$\times$ speedup in execution time over Themis, reducing latency from 13.015\,s to 0.562\,s. By utilizing a structural cluster graph to avoid the cycle explosion associated with $O(m^2)$ binary dependency graphs, FlashOrder helps prevent computational saturation during active attacks.

\begin{figure}[!ht]
    \centering
    \includegraphics[width=\linewidth]{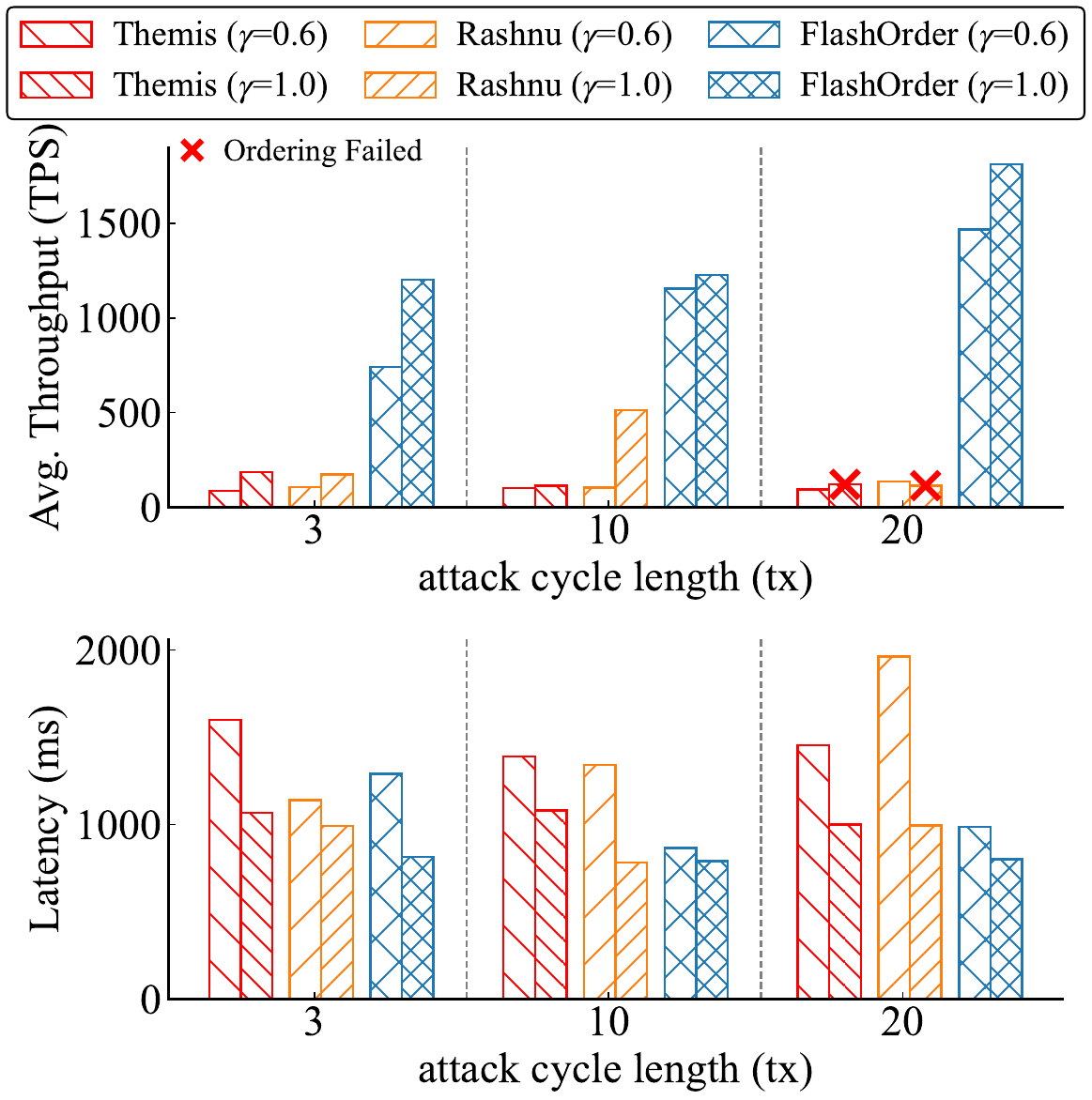}
    \caption{Condorcet attack robustness in WAN (50\,ms): average throughput (top) and client latency (bottom) versus attack cycle length for $\gamma\in\{0.6,1.0\}$.}
    \label{fig:condorcet_attack}
\end{figure}

\textbf{Throughput under attack.} Figure~\ref{fig:condorcet_attack} reports the 50\,ms WAN results for the two representative endpoints used in the paper, $\gamma\in\{0.6,1.0\}$, corresponding to a conservative threshold and the maximally permissive threshold, respectively. At $\gamma=0.6$, FlashOrder sustains 742/1155/1467 TPS for cycle lengths 3/10/20, whereas Themis reaches 87/100/93 TPS and Rashnu reaches 106/104/136 TPS. Averaged across cycle lengths, FlashOrder achieves 1121 TPS, i.e., $12.0\times$ higher than Themis and $9.7\times$ higher than Rashnu under adversarial Condorcet attacks. At $\gamma=1.0$, FlashOrder reaches 1202/1228/1813 TPS versus Themis 186/113/120 and Rashnu 172/513/113, corresponding to average gains of $10.1\times$ over Themis and $5.3\times$ over Rashnu.

\textbf{Latency and failure modes.} Latency trends are consistent with the throughput advantage. FlashOrder keeps client latency in a lower band than Themis and Rashnu in almost all settings: 867--1292\,ms at $\gamma=0.6$ and 790--814\,ms at $\gamma=1.0$, compared with Themis at 1389--1600\,ms ($\gamma=0.6$) / 1002--1082\,ms ($\gamma=1.0$) and Rashnu at 1139--1963\,ms ($\gamma=0.6$) / 784--995\,ms ($\gamma=1.0$). The only notable outlier is FlashOrder at $(\gamma=0.6,\,\text{cycle}=3)$, where latency rises to 1292\,ms, yet throughput still remains far above both baselines. We also observe explicit ordering failures in the same 50\,ms sweep: at $(\gamma=1.0,\,\text{cycle}=20)$, Themis abandons fair ordering once the Condorcet attack breaks its tournament-graph assumption, while Rashnu triggers fallback after over-pruning the graph rather than completing its intended ordering procedure.

Taken together, the attack experiments show that FlashOrder degrades more gracefully than SCC-based baselines along both performance and fairness dimensions. Under direct ordering manipulation, FlashOrder both preserves more robust precedence relations and bounds worst-case displacement more tightly, while still sustaining much higher service capacity than Themis and Rashnu.

\section{Related Work}
\label{sec:related}

The proliferation of MEV in DeFi \cite{qin2022darkforest,daian2020flash,li2023demystifying,qin2021flashloans} has motivated substantial research on order-fairness \cite{li2024sok,ciampi2025uc}. We organize this literature into three layers.

\textbf{Fairness definitions.} Aequitas \cite{kelkar2020order} formalized batch order-fairness through pairwise receive-order constraints, later adopted by Themis. Wendy \cite{kursawe2020wendy} and Pompe \cite{zhang2020pompe} explored timestamp-style fairness with weaker guarantees. Quick Order Fairness \cite{cachin2022quick} studies efficiency boundaries, DCN~\cite{constantinescu2023dcn} uses decentralized clocks, and bounded unfairness \cite{kiayias2024ordering} shows that minimizing order distortion inside cyclic regions is NP-hard. UC Transaction Serialization \cite{ciampi2025uc} elevates fairness into composable cryptographic abstractions. FlashOrder does not redefine fairness; it provides a deterministic execution path for the $\gamma$-batch-order-fairness regime.

\textbf{Ordering engines.} The dominant systems line keeps pairwise evidence but realizes it through graph-based aggregation. Themis \cite{kelkar2023themis} made this approach practical through deferred ordering and batch unspooling, while Rashnu \cite{nagda2024rashnu} reduces graph size by focusing on data-dependent transactions. Phalanx \cite{wang2022phalanx} performs ordered Byzantine consensus through graph-style dependency processing. These systems share the same SCC-based execution path. Subsequent work tries to improve this pipeline: CFTO \cite{nassar2024cfto} reduces communication overhead; Age-Aware Fairness \cite{sokolik2025age} controls tail delay; Dikaios \cite{wang2025dikaios} prunes dependency graphs; EquiBFT \cite{cai2025equibft} uses weighted voting; TEE-based approaches~\cite{stathakopoulou2021tee} rely on stronger trust assumptions. Auncel \cite{chen2024auncel} also targets the SCC bottleneck, but solves it through weight-based linearization ($W = 1 - k^d$) that commits all transactions in a single consensus round, yielding a total order without explicit cycle resolution. FlashOrder takes a different execution-level approach: it restructures ordering so that ambiguity remains localized through adaptive clustering before a batch-wide SCC can form, and uses hierarchical serialization rather than a single global pass. Its intra-cluster Net-Preference Score is related to aggregate preference rules in social choice theory~\cite{brandt2016handbook}, used as a deterministic local refinement rather than an optimal solution to cyclic aggregation.

\textbf{Protocol architectures.} SpeedyFair \cite{mu2024speedyfair} decouples ordering from consensus, a philosophy FlashOrder also adopts. Narwhal/Tusk \cite{danezis2022narwhal}, Bullshark \cite{spiegelman2022bullshark}, Shoal++ \cite{arun2025shoal}, Mysticeti~\cite{babel2023mysticeti}, Sailfish~\cite{shrestha2025sailfish}, Autobahn~\cite{giridharan2024autobahn}, and Jolteon/Ditto~\cite{gelashvili2022jolteon} improve DAG-based BFT throughput but do not solve fair ordering. FairDAG \cite{kang2025fairdag}, DAG of DAGs \cite{nagda2025dag}, and Fides~\cite{xie2025fides} integrate fair ordering into the DAG-based proposer layer, while Travelers~\cite{xue2024travelers} reduces communication complexity for fair ordering through probabilistic techniques. These approaches are complementary: they improve dissemination and proposal, while FlashOrder rewrites the ordering engine that serializes accumulated pairwise evidence into a final order.

\textbf{Layer 2 sequencers.} Most L2 rollups employ centralized sequencers, creating MEV opportunities \cite{alipanahloo2024maximum}. Mitigations include auction-based priority (Timeboost~\cite{mamageishvili2023timeboost}) and decentralized sequencer networks (Espresso~\cite{bearer2024espresso}, Astria). FlashOrder is orthogonal: it targets the algorithmic cost of converting replicated evidence into a deterministic fair order, regardless of who controls sequencing.

\section{Limitations}

FlashOrder uses the same thresholded pairwise evidence model as $\gamma$-batch-order-fair protocols, but its quotient graph aggregates evidence at the cluster level. Therefore, an unconditional statement that every robust pair is preserved across clusters would require an additional separation condition: the pair's signal must either induce the correct quotient edge or be absorbed into the same final cluster. The correctness arguments in Section~\ref{sec:correctness} make this condition explicit, keeping the correctness claim aligned with the algorithmic object FlashOrder actually constructs, while separating it from the empirical quality metrics reported in Section~\ref{sec:eval_fairness_quality}.

In particular, FlashOrder does not claim unconditional $\gamma$-batch-order-fairness for arbitrary cyclic regions. Intra-cluster Net-Preference Score is a deterministic refinement for ambiguous or cyclic regions; it does not claim to satisfy all robust pairs inside arbitrary Condorcet cycles, which is precisely the hard case identified by bounded-unfairness analyses~\cite{kiayias2024ordering}. Its role is to provide a reproducible total order whose fairness quality is measured empirically through robust-pair satisfaction and rank displacement.

\section{Conclusion}
\label{sec:conclusion}
FlashOrder is a deterministic fair-ordering engine that addresses the ordering-engine bottleneck of global SCC condensation in BFT transaction sequencing. By projecting pairwise preferences into a one-dimensional canonical embedding and localizing cyclic ambiguity through adaptive clustering and hierarchical serialization, FlashOrder shifts the critical path from batch-wide graph traversal to localized sorting and aggregation. Across prototype deployment and controlled simulation, FlashOrder achieves up to 10.5$\times$ higher throughput than Themis and 4.8$\times$ higher than Rashnu, with 77.2\% throughput retention relative to the non-fairness baseline at $n{=}101$. Under Condorcet attacks in a 50\,ms WAN, FlashOrder sustains 12.0$\times$ and 9.7$\times$ higher throughput than Themis and Rashnu on average, while reducing ordering latency from 13.0\,s to 0.56\,s over Themis. At $\gamma{=}0.8$, FlashOrder reduces maximum rank displacement from 301 and 287 to 34 positions (an 88.7\% reduction) and achieves an RPS of 0.873, improving to near-perfect RPS of 0.996 and MaxDisp of 10 at the stricter $\gamma{=}0.51$ setting. FlashOrder shows that fair ordering can be made more practical and replay-verifiable by restructuring the ordering engine itself, without changing the pairwise evidence model or redesigning the consensus substrate. Future work will integrate this localized ordering path with DAG-based consensus architectures, connect it more tightly to composable fairness frameworks, and extend it to broader shared-sequencing settings.

\bibliographystyle{ACM-Reference-Format}
\bibliography{ref}

@inproceedings{qin2022darkforest,
  author       = {Kaihua Qin and
                  Liyi Zhou and
                  Arthur Gervais},
  title        = {Quantifying Blockchain Extractable Value: How dark is the forest?},
  booktitle    = {43rd {IEEE} Symposium on Security and Privacy, {SP} 2022, San Francisco,
                  CA, USA, May 22-26, 2022},
  pages        = {198--214},
  publisher    = {{IEEE}},
  year         = {2022},
  doi          = {10.1109/SP46214.2022.9833734}
}

@inproceedings{daian2020flash,
  author       = {Philip Daian and
                  Steven Goldfeder and
                  Tyler Kell and
                  Yunqi Li and
                  Xueyuan Zhao and
                  Iddo Bentov and
                  Lorenz Breidenbach and
                  Ari Juels},
  title        = {Flash Boys 2.0: Frontrunning in Decentralized Exchanges, Miner Extractable
                  Value, and Consensus Instability},
  booktitle    = {2020 {IEEE} Symposium on Security and Privacy, {SP} 2020, San Francisco,
                  CA, USA, May 18-21, 2020},
  pages        = {910--927},
  publisher    = {{IEEE}},
  year         = {2020},
  doi          = {10.1109/SP40000.2020.00040}
}

@inproceedings{zhang2020pompe,
  author       = {Yunhao Zhang and
                  Srinath T. V. Setty and
                  Qi Chen and
                  Lidong Zhou and
                  Lorenzo Alvisi},
  title        = {Byzantine Ordered Consensus without Byzantine Oligarchy},
  booktitle    = {14th {USENIX} Symposium on Operating Systems Design and Implementation,
                  {OSDI} 2020, Virtual Event, November 4-6, 2020},
  pages        = {633--649},
  publisher    = {{USENIX} Association},
  year         = {2020}
}

@inproceedings{kursawe2020wendy,
  author       = {Klaus Kursawe},
  title        = {Wendy, the Good Little Fairness Widget: Achieving Order Fairness for
                  Blockchains},
  booktitle    = {Proceedings of the 2nd {ACM} Conference on Advances in Financial Technologies,
                  {AFT} 2020, New York, NY, USA, October 21-23, 2020},
  pages        = {25--36},
  publisher    = {{ACM}},
  year         = {2020},
  doi          = {10.1145/3419614.3423263}
}

@inproceedings{mu2024speedyfair,
  author       = {Ke Mu and
                  Bo Yin and
                  Alia Asheralieva and
                  Xuetao Wei},
  title        = {Separation is Good: {A} Faster Order-Fairness Byzantine Consensus},
  booktitle    = {31st Annual Network and Distributed System Security Symposium, {NDSS}
                  2024, San Diego, California, USA, February 26 - March 1, 2024},
  publisher    = {The Internet Society},
  year         = {2024}
}

@article{li2024sok,
  author       = {Zhuolun Li and
                  Evangelos Pournaras},
  title        = {SoK: Consensus for Fair Message Ordering},
  journal      = {CoRR},
  volume       = {abs/2411.09981},
  year         = {2024},
  doi          = {10.48550/arXiv.2411.09981},
  eprinttype   = {arXiv},
  eprint       = {2411.09981}
}

@inproceedings{li2023demystifying,
  author       = {Zihao Li and
                  Jianfeng Li and
                  Zheyuan He and
                  Xiapu Luo and
                  Ting Wang and
                  Xiaoze Ni and
                  Wenwu Yang and
                  Xi Chen and
                  Ting Chen},
  title        = {Demystifying {DeFi} {MEV} Activities in Flashbots Bundle},
  booktitle    = {Proceedings of the 2023 {ACM} {SIGSAC} Conference on Computer and
                  Communications Security, {CCS} 2023, Copenhagen, Denmark, November
                  26-30, 2023},
  pages        = {165--179},
  publisher    = {{ACM}},
  year         = {2023},
  doi          = {10.1145/3576915.3616590}
}

@inproceedings{kelkar2023themis,
  author       = {Mahimna Kelkar and
                  Soubhik Deb and
                  Sishan Long and
                  Ari Juels and
                  Sreeram Kannan},
  title        = {Themis: Fast, Strong Order-Fairness in Byzantine Consensus},
  booktitle    = {Proceedings of the 2023 {ACM} {SIGSAC} Conference on Computer and
                  Communications Security, {CCS} 2023, Copenhagen, Denmark, November
                  26-30, 2023},
  pages        = {475--489},
  publisher    = {{ACM}},
  year         = {2023},
  doi          = {10.1145/3576915.3616658}
}

@inproceedings{kelkar2020order,
  author       = {Mahimna Kelkar and
                   Fan Zhang and
                   Steven Goldfeder and
                   Ari Juels},
  title        = {Order-Fairness for Byzantine Consensus},
  booktitle    = {Advances in Cryptology -- {CRYPTO} 2020: 40th Annual International Cryptology
                   Conference, {CRYPTO} 2020, Santa Barbara, CA, USA, August 17-21, 2020,
                   Proceedings, Part {III}},
  series       = {Lecture Notes in Computer Science},
  volume       = {12172},
  pages        = {451--480},
  publisher    = {Springer International Publishing},
  year         = {2020},
  doi          = {10.1007/978-3-030-56877-1_16}
}

@inproceedings{danezis2022narwhal,
  author       = {George Danezis and
                  Lefteris Kokoris{-}Kogias and
                  Alberto Sonnino and
                  Alexander Spiegelman},
  title        = {Narwhal and Tusk: a {DAG}-based mempool and efficient {BFT} consensus},
  booktitle    = {EuroSys '22: Seventeenth European Conference on Computer Systems,
                  Rennes, France, April 5 - 8, 2022},
  pages        = {34--50},
  publisher    = {{ACM}},
  year         = {2022},
  doi          = {10.1145/3492321.3519594}
}

@article{nagda2025dag,
  author       = {Heena Nagda and
                  Sidharth Sankhe and
                  Sakshi Sinha and
                  Keon Attarha and
                  Mohammad Javad Amiri and
                  Boon Thau Loo},
  title        = {{DAG} of {DAGs}: Order-Fairness Made Practical},
  journal      = {Proc. {ACM} Manag. Data},
  volume       = {3},
  number       = {6},
  pages        = {1--27},
  year         = {2025},
  doi          = {10.1145/3769777}
}

@inproceedings{yin2019hotstuff,
  author       = {Maofan Yin and
                  Dahlia Malkhi and
                  Michael K. Reiter and
                  Guy Golan{-}Gueta and
                  Ittai Abraham},
  title        = {HotStuff: {BFT} Consensus with Linearity and Responsiveness},
  booktitle    = {Proceedings of the 2019 {ACM} Symposium on Principles of Distributed
                  Computing, {PODC} 2019, Toronto, ON, Canada, July 29 - August 2, 2019},
  pages        = {347--356},
  publisher    = {{ACM}},
  year         = {2019},
  doi          = {10.1145/3293611.3331591}
}

@inproceedings{cachin2022quick,
  author       = {Christian Cachin and
                  Jovana Micic and
                  Nathalie Steinhauer and
                  Luca Zanolini},
  title        = {Quick Order Fairness},
  booktitle    = {Financial Cryptography and Data Security -- 26th International Conference,
                  {FC} 2022, Grenada, May 2-6, 2022, Revised Selected Papers},
  series       = {Lecture Notes in Computer Science},
  pages        = {316--333},
  publisher    = {Springer},
  year         = {2022},
  doi          = {10.1007/978-3-031-18283-9\_15}
}

@article{alipanahloo2024maximum,
  author       = {Zeinab Alipanahloo and
                  Abdelhakim Senhaji Hafid and
                  Kaiwen Zhang},
  title        = {Maximum Extractable Value {(MEV)} Mitigation Approaches in Ethereum
                  and Layer-2 Chains: {A} Comprehensive Survey},
  journal      = {{IEEE} Access},
  volume       = {12},
  pages        = {185212--185231},
  year         = {2024},
  doi          = {10.1109/ACCESS.2024.3514375}
}

@article{nassar2024cfto,
  author       = {Mohammad Nassar and
                  Ori Rottenstreich and
                  Ariel Orda},
  title        = {{CFTO:} Communication-Aware Fairness in Blockchain Transaction Ordering},
  journal      = {{IEEE} Trans. Netw. Serv. Manag.},
  volume       = {21},
  number       = {1},
  pages        = {490--506},
  year         = {2024},
  doi          = {10.1109/TNSM.2023.3298201}
}

@article{nagda2024rashnu,
  author       = {Heena Nagda and
                  Shubhendra Pal Singhal and
                  Mohammad Javad Amiri and
                  Boon Thau Loo},
  title        = {Rashnu: Data-Dependent Order-Fairness},
  journal      = {Proc. {VLDB} Endow.},
  volume       = {17},
  number       = {9},
  pages        = {2335--2348},
  year         = {2024},
  doi          = {10.14778/3665844.3665861}
}

@inproceedings{cai2025equibft,
  author       = {Siwei Cai and
                  Lei Fan and
                  Shengyun Liu and
                  Hong{-}Sheng Zhou},
  title        = {EquiBFT: {A} Framework for Achieving Fairness in {BFT} Consensus},
  booktitle    = {45th {IEEE} International Conference on Distributed Computing Systems,
                  {ICDCS} 2025, Glasgow, United Kingdom, July 21-23, 2025},
  pages        = {341--351},
  publisher    = {{IEEE}},
  year         = {2025},
  doi          = {10.1109/ICDCS63083.2025.00041}
}

@inproceedings{chen2024auncel,
  author       = {Wuhui Chen and
                  Yikai Feng and
                  Jianting Zhang and
                  Zhongteng Cai and
                  Hong{-}Ning Dai and
                  Zibin Zheng},
  title        = {Auncel: Fair Byzantine Consensus Protocol with High Performance},
  booktitle    = {{IEEE} {INFOCOM} 2024 - {IEEE} Conference on Computer Communications,
                  Vancouver, BC, Canada, May 20-23, 2024},
  pages        = {1251--1260},
  publisher    = {{IEEE}},
  year         = {2024},
  doi          = {10.1109/INFOCOM52122.2024.10621379}
}

@article{wang2025dikaios,
  author       = {Yang Wang and
                  Xiaofei Xing and
                  Guojun Wang and
                  Yuheng Zhang and
                  Peiqiang Li},
  title        = {Dikaios: Position-anchored group ordering with reputation for fair
                  and efficient Byzantine consensus},
  journal      = {Comput. Networks},
  volume       = {269},
  pages        = {111423},
  year         = {2025},
  doi          = {10.1016/j.comnet.2025.111423}
}

@article{sokolik2025age,
  author       = {Yaakov Sokolik and
                   Mohammad Nassar and
                   Ori Rottenstreich},
  title        = {Age-Aware Fairness in Blockchain Transaction Ordering for Reducing
                   Tail Latency},
  journal      = {{IEEE/ACM} Transactions on Networking},
  volume       = {33},
  number       = {2},
  pages        = {807--822},
  year         = {2025},
  doi          = {10.1109/TNET.2024.3503758}
}

@inproceedings{vafadar2023condorcet,
  author       = {Mohammad Amin Vafadar and
                  Majid Khabbazian},
  title        = {Condorcet Attack Against Fair Transaction Ordering},
  booktitle    = {5th Conference on Advances in Financial Technologies, {AFT} 2023,
                  Princeton, NJ, USA, October 23-25, 2023},
  series       = {LIPIcs},
  pages        = {15:1--15:21},
  publisher    = {Schloss Dagstuhl - Leibniz-Zentrum f{\"{u}}r Informatik},
  year         = {2023},
  doi          = {10.4230/LIPIcs.AFT.2023.15}
}

@inproceedings{park2025frontrunning,
  author       = {Eunchan Park and
                  Taeung Yoon and
                  Hocheol Nam and
                  Deepak Maram and
                  Min Suk Kang},
  title        = {On Frontrunning Risks in Batch-Order Fair Systems for Blockchains},
  booktitle    = {Proceedings of the 2025 {ACM} {SIGSAC} Conference on Computer and
                  Communications Security, {CCS} 2025, Taipei, Taiwan, October 13-17,
                  2025},
  pages        = {918--932},
  publisher    = {{ACM}},
  year         = {2025},
  doi          = {10.1145/3719027.3744879}
}

@inproceedings{cloudlab,
  author       = {Dmitry Duplyakin and
                  Robert Ricci and
                  Aleksander Maricq and
                  Gary Wong and
                  Jonathon Duerig and
                  Eric Eide and
                  Leigh Stoller and
                  Mike Hibler and
                  David Johnson and
                  Kirk Webb and
                  Aditya Akella and
                  Kuang{-}Ching Wang and
                  Glenn Ricart and
                  Larry Landweber and
                  Chip Elliott and
                  Michael Zink and
                  Emmanuel Cecchet and
                  Snigdhaswin Kar and
                  Prabodh Mishra},
  title        = {The Design and Operation of CloudLab},
  booktitle    = {Proceedings of the 2019 {USENIX} Annual Technical Conference, {USENIX}
                  {ATC} 2019, Renton, WA, USA, July 10-12, 2019},
  pages        = {1--14},
  publisher    = {{USENIX} Association},
  year         = {2019}
}

@misc{bearer2024espresso,
  author       = {Jeb Bearer and
                  Benedikt B{\"u}nz and
                  Philippe Camacho and
                  Binyi Chen and
                  Ellie Davidson and
                  Ben Fisch and
                  Brendon Fish and
                  Gus Gutoski and
                  Fernando Krell and
                  Chengyu Lin and
                  Dahlia Malkhi and
                  Kartik Nayak and
                  Keyao Shen and
                  Alex Xiong and
                  Nathan Yospe and
                  Sishan Long},
  title        = {The Espresso Sequencing Network: {HotShot} Consensus, Tiramisu
                  Data-Availability, and Builder-Exchange},
  howpublished = {Cryptology {ePrint} Archive, Paper 2024/1189},
  year         = {2024},
  url          = {https://eprint.iacr.org/2024/1189}
}

@inproceedings{mamageishvili2023timeboost,
  author       = {Akaki Mamageishvili and
                  Mahimna Kelkar and
                  Jan Christoph Schlegel and
                  Edward W. Felten},
  title        = {Buying Time: Latency Racing vs. Bidding for Transaction Ordering},
  booktitle    = {5th Conference on Advances in Financial Technologies, {AFT} 2023,
                  Princeton, NJ, USA, October 23-25, 2023},
  series       = {Leibniz International Proceedings in Informatics (LIPIcs)},
  volume       = {282},
  pages        = {23:1--23:22},
  publisher    = {Schloss Dagstuhl -- Leibniz-Zentrum f{\"{u}}r Informatik},
  year         = {2023},
  doi          = {10.4230/LIPIcs.AFT.2023.23}
}

@inproceedings{castro1999practical,
  author       = {Miguel Castro and
                  Barbara Liskov},
  title        = {Practical Byzantine Fault Tolerance},
  booktitle    = {Proceedings of the 3rd Symposium on Operating Systems Design and
                  Implementation, {OSDI} 1999, New Orleans, Louisiana, USA, February
                  22-25, 1999},
  pages        = {173--186},
  publisher    = {{USENIX} Association},
  year         = {1999}
}

@article{wang2022phalanx,
  author       = {Guangren Wang and
                   Liang Cai and
                   Fangyu Gai and
                   Jianyu Niu},
  title        = {Phalanx: A Practical Byzantine Ordered Consensus Protocol},
  journal      = {CoRR},
  volume       = {abs/2209.08512},
  year         = {2022},
  eprinttype   = {arXiv},
  eprint       = {2209.08512}
}

@inproceedings{kiayias2024ordering,
  title={Ordering transactions with bounded unfairness: Definitions, complexity and constructions},
  author={Kiayias, Aggelos and Leonardos, Nikos and Shen, Yu},
  booktitle={Annual International Conference on the Theory and Applications of Cryptographic Techniques},
  pages={34--63},
  year={2024},
  organization={Springer}
}

@article{ren2025proofcarrying,
  author       = {Pengkun Ren and
                  Hai Dong and
                  Nasrin Sohrabi and
                  Zahir Tari and
                  Pengcheng Zhang},
  title        = {Proof-Carrying Fair Ordering: Asymmetric Verification for {BFT}
                  via Incremental Graphs},
  journal      = {CoRR},
  volume       = {abs/2510.14186},
  year         = {2025},
  eprinttype   = {arXiv},
  eprint       = {2510.14186}
}

@inproceedings{spiegelman2022bullshark,
  author       = {Alexander Spiegelman and
                  Neil Giridharan and
                  Alberto Sonnino and
                  Lefteris Kokoris{-}Kogias},
  title        = {Bullshark: {DAG} {BFT} Protocols Made Practical},
  booktitle    = {Proceedings of the 2022 {ACM} {SIGSAC} Conference on Computer and
                  Communications Security, {CCS} 2022, Los Angeles, CA, USA, November
                  7-11, 2022},
  pages        = {2705--2718},
  publisher    = {{ACM}},
  year         = {2022},
  doi          = {10.1145/3548606.3559361}
}

@inproceedings{arun2025shoal,
  author       = {Balaji Arun and
                   Zekun Li and
                   Florian Suri{-}Payer and
                   Sourav Das and
                   Alexander Spiegelman},
  title        = {Shoal++: High Throughput {DAG} {BFT} Can Be Fast and Robust!},
  booktitle    = {22nd {USENIX} Symposium on Networked Systems Design and Implementation,
                   {NSDI} 2025, Philadelphia, PA, USA, April 28-30, 2025},
  publisher    = {{USENIX} Association},
  year         = {2025},
  pages        = {813--826}
}

@inproceedings{miller2016honeybadger,
  author       = {Andrew Miller and
                   Yu Xia and
                   Kyle Croman and
                   Elaine Shi and
                   Dawn Song},
  title        = {The {HoneyBadger} of {BFT} Protocols},
  booktitle    = {Proceedings of the 2016 {ACM} {SIGSAC} Conference on Computer and
                   Communications Security, {CCS} 2016, Vienna, Austria, October 24-28,
                   2016},
  pages        = {31--42},
  publisher    = {{ACM}},
  year         = {2016},
  doi          = {10.1145/2976749.2978399}
}

@inproceedings{gilad2017algorand,
  author       = {Yossi Gilad and
                  Rotem Hemo and
                  Silvio Micali and
                  Georgios Vlachos and
                  Nickolai Zeldovich},
  title        = {Algorand: Scaling {Byzantine} Agreements for Cryptocurrencies},
  booktitle    = {Proceedings of the 26th Symposium on Operating Systems Principles,
                  {SOSP} 2017, Shanghai, China, October 28-31, 2017},
  pages        = {51--68},
  publisher    = {{ACM}},
  year         = {2017},
  doi          = {10.1145/3132747.3132757}
}

@article{team2018avalanche,
  author       = {Team Rocket and
                   Maofan Yin and
                   Kevin Sekniqi and
                   Robbert van Renesse and
                   Emin G{\"u}n Sirer},
  title        = {Scalable and Probabilistic Leaderless {BFT} Consensus through Metastability},
  journal      = {CoRR},
  volume       = {abs/1906.08936},
  year         = {2020},
  eprinttype   = {arXiv},
  eprint       = {1906.08936}
}

@article{kang2025fairdag,
  author       = {Dakai Kang and
                   Junchao Chen and
                   Tien Tuan Anh Dinh and
                   Mohammad Sadoghi},
  title        = {{FairDAG}: Consensus Fairness over Multi-Proposer Causal Design},
  journal      = {Proc. {VLDB} Endow.},
  volume       = {19},
  number       = {2},
  pages        = {265--278},
  year         = {2025},
  doi          = {10.14778/3773749.3773763}
}

@article{thomson2010determinism,
  author       = {Alexander Thomson and
                  Daniel J. Abadi},
  title        = {The Case for Determinism in Database Systems},
  journal      = {Proc. {VLDB} Endow.},
  volume       = {3},
  number       = {1-2},
  pages        = {70--80},
  year         = {2010},
  doi          = {10.14778/1920841.1920855}
}

@inproceedings{ruan2021blockchain,
  author       = {Pingcheng Ruan and
                  Tien Tuan Anh Dinh and
                  Dumitrel Loghin and
                  Meihui Zhang and
                  Gang Chen and
                  Qian Lin and
                  Beng Chin Ooi},
  title        = {Blockchains vs. Distributed Databases: Dichotomy and Fusion},
  booktitle    = {Proc. {ACM} {SIGMOD}/\{PODS\} Conf.},
  pages        = {1504--1517},
  year         = {2021},
  doi          = {10.1145/3448016.3452789}
}

@book{condorcet1785,
  author       = {Marquis de Condorcet},
  title        = {Essai sur l'application de l'analyse {\`a} la probabilit{\'e} des d{\'e}cisions rendues {\`a} la pluralit{\'e} des voix},
  publisher    = {Imprimerie Royale},
  address      = {Paris},
  year         = {1785}
}

@article{buchman2018tendermint,
  author       = {Ethan Buchman and
                  Jae Kwon and
                  Zarko Milosevic},
  title        = {The Latest Gossip on {BFT} Consensus},
  journal      = {CoRR},
  volume       = {abs/1807.04938},
  year         = {2018},
  eprinttype   = {arXiv},
  eprint       = {1807.04938}
}

@inproceedings{keidar2021dagrider,
  author       = {Idit Keidar and
                  Eleftherios Kokoris{-}Kogias and
                  Oded Naor and
                  Alexander Spiegelman},
  title        = {All You Need is {DAG}},
  booktitle    = {Proceedings of the 2021 {ACM} Symposium on Principles of Distributed Computing,
                  {PODC} 2021, Virtual Event, Italy, July 26-30, 2021},
  pages        = {165--175},
  publisher    = {{ACM}},
  year         = {2021},
  doi          = {10.1145/3465084.3467905}
}

@inproceedings{constantinescu2023dcn,
  author       = {Andrei Constantinescu and
                   Diana Ghinea and
                   Lioba Heimbach and
                   Zilin Wang and
                   Roger Wattenhofer},
  title        = {A Fair and Resilient Decentralized Clock Network for Transaction Ordering},
  booktitle    = {27th International Conference on Principles of Distributed Systems, {OPODIS}
                   2023, Tokyo, Japan, December 6-8, 2023},
  series       = {Leibniz International Proceedings in Informatics (LIPIcs)},
  volume       = {286},
  pages        = {8:1--8:20},
  publisher    = {Schloss Dagstuhl -- Leibniz-Zentrum f{\"{u}}r Informatik},
  year         = {2024},
  doi          = {10.4230/LIPIcs.OPODIS.2023.8}
}

@inproceedings{babel2023mysticeti,
  author       = {Kushal Babel and
                   Andrey Chursin and
                   George Danezis and
                   Anastasios Kichidis and
                   Lefteris Kokoris{-}Kogias and
                   Arun Koshy and
                   Alberto Sonnino and
                   Mingwei Tian},
  title        = {Mysticeti: Reaching the Latency Limits with Uncertified {DAGs}},
  booktitle    = {Proceedings of the 2025 Network and Distributed System Security Symposium,
                   {NDSS} 2025, San Diego, CA, USA, February 24-28, 2025},
  publisher    = {The Internet Society},
  year         = {2025},
  doi          = {10.14722/ndss.2025.240929}
}

@inproceedings{shrestha2025sailfish,
  author       = {Nibesh Shrestha and
                  Ruchit Shrothrium and
                  Aniket Kate and
                  Kartik Nayak},
  title        = {Sailfish: Towards Improving the Latency of {DAG}-based {BFT}},
  booktitle    = {46th {IEEE} Symposium on Security and Privacy, {SP} 2025},
  publisher    = {{IEEE}},
  year         = {2025},
  doi          = {10.1109/SP61157.2025.00021}
}

@inproceedings{giridharan2024autobahn,
  author       = {Neil Giridharan and
                  Florian Suri{-}Payer and
                  Ittai Abraham and
                  Lorenzo Alvisi and
                  Natacha Crooks},
  title        = {Autobahn: Seamless High Speed {BFT}},
  booktitle    = {Proceedings of the {ACM} {SIGOPS} 30th Symposium on Operating Systems Principles,
                  {SOSP} 2024},
  pages        = {1--23},
  publisher    = {{ACM}},
  year         = {2024}
}

@inproceedings{ciampi2025uc,
  author       = {Michele Ciampi and
                   Aggelos Kiayias and
                   Yu Shen},
  title        = {Universal Composable Transaction Serialization with Order Fairness},
  booktitle    = {Advances in Cryptology -- {CRYPTO} 2024: 44th Annual International Cryptology
                   Conference, Santa Barbara, CA, USA, August 18-22, 2024, Proceedings, Part {I}},
  series       = {Lecture Notes in Computer Science},
  volume       = {14921},
  pages        = {147--180},
  publisher    = {Springer Nature Switzerland},
  year         = {2024},
  doi          = {10.1007/978-3-031-68379-4\_5}
}

@book{brandt2016handbook,
  author       = {Felix Brandt and
                  Vincent Conitzer and
                  Ulle Endriss and
                  J{\'e}r{\^o}me Lang and
                  Ariel D. Procaccia},
  title        = {Handbook of Computational Social Choice},
  publisher    = {Cambridge University Press},
  year         = {2016}
}

@inproceedings{stathakopoulou2021tee,
  author       = {Chrysoula Stathakopoulou and
                  Signe R{\"u}sch and
                  Marcus Brandenburger and
                  Marko Vukoli{\'c}},
  title        = {Adding Fairness to Order: Preventing Front-Running Attacks in {BFT} Protocols
                  Using {TEE}s},
  booktitle    = {40th {IEEE} International Symposium on Reliable Distributed Systems, {SRDS} 2021},
  pages        = {34--45},
  publisher    = {{IEEE}},
  year         = {2021},
  doi          = {10.1109/SRDS53918.2021.00013}
}

@article{xie2025fides,
  author       = {Shaokang Xie and
                  Dakai Kang and
                  Hanzheng Lyu and
                  Jianyu Niu and
                  Mohammad Sadoghi},
  title        = {Fides: Scalable Censorship-Resistant {DAG} Consensus via Trusted Components},
  journal      = {CoRR},
  volume       = {abs/2501.01062},
  year         = {2025},
  eprinttype   = {arXiv},
  eprint       = {2501.01062}
}

@inproceedings{gelashvili2022jolteon,
  author       = {Rati Gelashvili and
                  Lefteris Kokoris{-}Kogias and
                  Alberto Sonnino and
                  Alexander Spiegelman and
                  Zhuolun Xiang},
  title        = {Jolteon and Ditto: Network-Adaptive Efficient Consensus with Asynchronous
                  Fallback},
  booktitle    = {Financial Cryptography and Data Security -- 26th International Conference,
                  {FC} 2022, Grenada, May 2-6, 2022},
  series       = {Lecture Notes in Computer Science},
  pages        = {296--315},
  publisher    = {Springer},
  year         = {2022},
  doi          = {10.1007/978-3-031-18283-9\_14}
}

@inproceedings{qin2021flashloans,
  author       = {Kaihua Qin and
                  Liyi Zhou and
                  Benjamin Livshits and
                  Arthur Gervais},
  title        = {Attacking the {DeFi} Ecosystem with Flash Loans for Fun and Profit},
  booktitle    = {Financial Cryptography and Data Security -- 25th International Conference,
                  {FC} 2021, Virtual Event, March 1-5, 2021},
  series       = {Lecture Notes in Computer Science},
  pages        = {3--32},
  publisher    = {Springer},
  year         = {2021}
}

@article{xue2024travelers,
  title={Travelers: A scalable fair ordering BFT system},
  author={Xue, Bowen and Kannan, Sreeram},
  journal={arXiv preprint arXiv:2401.02030},
  year={2024}
}

\end{document}